\newcommand{\N}{\mathbb{N}}
\newcommand{\id}{\mathit{id}}
\newcommand{\D}{\mathbb{D}}
\newcommand{\A}{\mathcal{A}}
\renewcommand{\S}{\mathcal{S}}
\newcommand{\E}{\mathcal{E}}
\newcommand{\T}{\mathcal{T}}
\newcommand{\Orbit}{\mathit{Orbit}}
\newcommand{\Sym}{\mathsf{Sym}}
\newcommand{\Pos}{\mathit{Pos}}
\newcommand{\arity}{\mathit{arity}}
\newcommand{\Tree}{\mathit{Tree}}
\newcommand{\Subtree}{\mathit{Subtree}}
\newcommand{\Context}{\mathit{Context}}
\newcommand{\Next}{\mathit{Next}}
\newcommand{\row}{\mathit{row}}
\newcommand{\lec}{\llbracket}
\newcommand{\rec}{\rrbracket}
\newcommand{\CSec}{{\lec C,S \rec}}
\newcommand{\DTec}{{\lec D,T \rec}}
\newcommand{\Q}{\tilde{Q}}
\newcommand{\F}{\tilde{F}}
\newcommand{\del}{\tilde{\delta}}
\newcommand{\rSET}{\row}
\newcommand{\rSETd}{\row'}
\newcommand{\trm}{\mathit{trm}}
\newcommand{\accept}{\mathit{accept}}
\newcommand{\reject}{\mathit{reject}}
\title{Active Learning for Deterministic Bottom-up Nominal Tree Automata}
\author{Rindo Nakanishi\inst{1} \and
	Yoshiaki Takata\inst{2} \and
	Hiroyuki Seki\inst{1}}
\authorrunning{R. Nakanishi et al.}
\institute{Graduate School of Informatics, Nagoya University \\
	Furo-cho, Chikusa, Nagoya 464-8601, Japan \\
	\email{rindo@sqlab.jp}, \email{seki@i.nagoya-u.ac.jp}
		\and Graduate School of Engineering, Kochi University of Technology \\
	Tosayamada, Kami City, Kochi 782-8502, Japan \\
	\email{takata.yoshiaki@kochi-tech.ac.jp}}
\begin{document}
\maketitle
\begin{abstract}
	Nominal set plays a central role in a group-theoretic extension of 
finite automata to those over an infinite set of data values. 
Moerman et al. proposed an active learning algorithm for
nominal word automata with the equality symmetry.
In this paper, we introduce deterministic bottom-up nominal tree automata (DBNTA), 
which operate on trees whose nodes are labelled with 
elements of an orbit finite nominal set. 
We then prove a Myhill-Nerode theorem for the class of languages recognized by DBNTA
and propose an active learning algorithm for DBNTA. 
The algorithm can deal with any data symmetry that admits least support, 
not restricted to the equality symmetry and/or the total order symmetry. 
To prove the termination of the algorithm, 
we define a partial order on nominal sets and show that 
there is no infinite chain of orbit finite nominal sets 
with respect to this partial order between any two orbit finite sets. 

\end{abstract}
\keywords{nominal tree automata, active learning, Myhill-Nerode theorem}

\section{Introduction}
	Computational models such as finite automaton, pushdown automaton and context-free grammar
provide a theoretical basis for automated technologies including model checking, testing and synthesis. 
Although the technologies have brought fruitful success, these models cannot directly deal with data values. 
However, if we add to a classical model the ability of processing data values, the resulting model easily 
becomes Turing machine-equivalent and the decidability needed for automated technologies is lost. 
Register automaton (RA) is an extension of finite automaton (FA) by adding registers for manipulating 
data values in a restricted way \cite{KF94}. 
RA can compare an input data value with those stored in its registers to determine its behavior. 
RA inherits some of the good properties from FA including closure properties on language operations
and the decidability of basic problems. 
For example, the membership and emptiness problems are decidable for RA
and their complexities are extensively studied \cite{KF94,SI00,NSV04,LTV15}. 
Similar extensions of other classical models have been proposed 
such as register tree automaton \cite{KT00,FS11}, register context-free grammar \cite{CK98,STS18} and register pushdown automaton 
\cite{MRT17,STS21}. 
Logics on data words have also been proposed including 
LTL with the freeze quantifier \cite{DL09} and two-variable first-order logic \cite{BDMSS11}. 

A common property of these models is that the behavior of an automaton (or a grammar)
does not depend on data values themselves, but on the relationship (e.g., equality, total order) among data values. 
Assume that the comparison operator of RA is only equality check.
Also assume that an RA $A$ has one register to store the first data value of an input word and 
test whether the remaining data values are different from the data value in the register 
except the last one in the input, which should be the same as the first data value. 
Then, $A$ accepts data words $2\cdot 5\cdot 6\cdot 2$, $8\cdot 1\cdot 3\cdot 8$, and so on. 
Note that the data word $8\cdot 1\cdot 3\cdot 8$ can be obtained from $2\cdot 5\cdot 6\cdot 2$ 
by the permutation that maps $2, 5$ and $6$ to $8, 1$ and $3$, respectively. 

The above observation gives us a group-theoretic extension of FA \cite{BKL14}. 
Assume that we are given a countable set ${\mathbb D}$ of data values and a permutation group $G$ on ${\mathbb D}$. 
To deal with data values in a restricted way, 
we use orbit finite sets instead of finite sets to represent both of an alphabet and a set of states. 
A set $X$ is called a $G$-set if $X$ is equipped with actions (or operations) having the group structure $G$. 
Let $X$ be a $G$-set. 
The orbit of $x\in X$ is the set $\{ x\cdot \pi \mid  \pi \in G  \}$. 
$X$ is {\em orbit finite} if $X$ is divided into a finite number of orbits. 
A set $C\subseteq {\mathbb D}$ is a {\em support} of $x\in X$ 
if any action $\pi$ that acts as identity on $C$ does not move $x$, i.e., maps $x$ to $x$. 
$X$ is {\em nominal} if every $x\in X$ has a finite support. 
Intuitively, $X$ is nominal if for every $x\in X$, all the information on $x$
can be represented by a finite subset of data values, which corresponds to the contents of registers. 
A nominal automaton over an orbit finite alphabet consists of 
an orbit finite nominal set of states and an (equivariant) transition relation on states. 

Automated learning methods are incorporated into software verification and testing 
(see \cite{Le06,CGKPV18} for an overview). 
Two well-known applications are black-box checking \cite{PVY99} and compositional verification \cite{CGP03}. 
Among others, Angluin's $L^{\ast}$ algorithm \cite{An87} is frequently used in these methods. 
The algorithm learns the minimum FA for an unknown regular language $U$
by constructing an observation table. 
Rows and columns of the table are sample input strings and each entry of the table is 1 (accept) or 0 (reject). 
The algorithm expands the table based on answers from a teacher (oracle) of $U$ for membership and equivalence queries
until the teacher answers yes to an equivalence query. 
The correctness of $L^{\ast}$ is guaranteed by the Myhill-Nerode theorem for regular languages. 
The $L^{\ast}$ algorithm has been extended for register automata (e.g. \cite{BHLM13,CHJS16})
and a learning tool RALib is implemented \cite{CHJS16}. 
In RA, a state transition depends on the comparison among an input data value and those stored in the registers
specified as the guard condition of an applied transition rule. 
Due to this feature of RA, an entry of an observation table is not just 0/1 but more complex information
that represents the guard condition of a transition in RA (a symbolic decision tree in \cite{CHJS16}), 
which makes the algorithm rather complicated. 
Moerman et al. proposed an $L^{\ast}$-style algorithm for nominal word automata \cite{MSSKS17}. 
Their algorithm recovers the simplicity of the original $L^{\ast}$ algorithm 
by the abstract feature of nominal automaton, which is independent of a concrete representation of an automaton. 
However, the algorithm assumes the equality symmetry as the structure of the set of data values. 
Moreover, tree models that can manipulate data values are needed for the basis of 
XML document processing because an XML document usually contains data values associated with structural information
represented by a tree \cite{LV12,LTV15}.  
For such applications, tree automata theory based on nominal sets should be developed. 

In this paper, we define deterministic bottom-up nominal tree automata (DBNTA), 
which operate on trees whose nodes are labelled with elements of an orbit finite nominal set. 
We then prove a Myhill-Nerode theorem for the class of languages recognized by DBNTA 
and propose an active learning algorithm for DBNTA based on the theorem. 
The algorithm can deal with any data symmetry that admits least support, 
not restricted to the equality symmetry and/or the total order symmetry. 
To prove the termination of the algorithm, 
we define a partial order on nominal sets and show that 
there is no infinite chain of orbit finite nominal sets with respect to this partial order 
between any two orbit finite sets.

\section{Preliminaries}
	\subsection{Nominal set}
Let $G$ be a group and $X$ be a set.
A group action of $G$ on $X$ is a function $\:{\cdot}:X \times G \to X$ satisfying
\[
	x \cdot e = x \qquad \text{and} \qquad  
	x \cdot (\pi \sigma) = (x \cdot \pi) \cdot \sigma
\]
for all $x \in X$ and $\pi,\sigma \in G$,
where $e \in G$ is the neutral element of $G$ 
and $\pi \sigma$ is the product of $\pi$ and $\sigma$ on $G$.
We call a set with a group action of $G$ a $G$-set.

We define the \emph{orbit} of $x \in X$ as
$\Orbit(x) = \{ x \cdot \pi \mid \pi \in G \} \subseteq X$.
A $G$-set is uniquely partitioned into different orbits.
A $G$-set consisting of one orbit is called a \emph{single orbit} set,
and a $G$-set consisting of a finite number of orbits is called an \emph{orbit finite} set.
We define an \emph{alphabet} as an orbit finite set.

Let $X$ be a $G$-set.
$Y \subseteq X$ is called \emph{equivariant} 
if $y \in Y \Rightarrow y \cdot \pi \in Y$ holds for all $\pi \in G$.
Equivalently, this means that $Y$ is a union of some orbits of $X$.
In the same way, for $G$-sets $X$ and $Y$, 
a binary relation $R \subseteq X \times Y$ is called equivariant 
if $(x,y) \in R \Rightarrow (x \cdot \pi,y \cdot \pi) \in R$ holds for all $ \pi \in G$.
An $n$-ary equivariant relation is defined in the same way for $n \ge 3$.
If a binary relation $f \subseteq X \times Y$ is a function, 
$f$ is equivariant if and only if
$f(x \cdot \pi) = f(x) \cdot \pi$ holds for all $x \in X$ and $\pi \in G$.

Let $\D$ be a countable set of data values and $G$ be a permutation group of $\D$, 
i.e., a subgroup of the symmetric group $\Sym(\D)$ of $\D$.
We call $(\D,G)$ a \emph{data symmetry}.
We show some examples of data symmetries.
The equality symmetry is $({\mathbb N}, \Sym({\mathbb N}))$, where ${\mathbb N}$ is the set of natural numbers
and $\Sym({\mathbb N})$ is the group of all bijections on ${\mathbb N}$.
The total order symmetry is
$({\mathbb Q}, G_<)$ where ${\mathbb Q}$ is the set of rational numbers and
$G_<$ is the group of monotone bijections on ${\mathbb Q}$.
The integer symmetry is
$({\mathbb Z}, G_{{\mathbb Z}})$ where ${\mathbb Z}$ is the set of integers and
$G_{{\mathbb Z}}$ is the group of translations $i \mapsto i+c$ for $c\in {\mathbb Z}$.

Let $x \in X$ and $C \subseteq \mathbb{D}$.
If for every $\pi \in G$,
\[
	(\forall c \in C.\  \pi(c) = c) \Rightarrow x \cdot \pi = x
\]
holds, we say that $C$ \emph{supports} $x$ or $C$ is a \emph{support} of $x$.
That is, $C$ supports $x$ if every $\pi$ which is the identity on $C$ does not move $x$.
A $G$-set is \emph{nominal}, if every element of the set has a finite support.
In any data symmetry $({\mathbb D}, G)$,
${\mathbb D}$ itself is a nominal $G$-set because any element $d\in{\mathbb D}$ has
a support $\{d\} \subseteq {\mathbb D}$.
${\mathbb D}^{\ast}$ is also nominal because any element $d_1d_2\cdots d_n \in{\mathbb D}^{\ast}$
has a support $\{d_1, d_2, \ldots, d_n\}$.
On the other hand, ${\mathbb D}^{\omega}$, the set of infinite sequences over ${\mathbb D}$,
is not nominal.
In the following, we just call an alphabet that is nominal a {\em nominal alphabet}.

Let $C \subseteq \D$ be a support of $x \in X$.
If all supports of $x$ are supersets of $C$, $C$ is the \emph{least support} of $x$.
For a data symmetry $(\D,G)$, if every element of every nominal $G$-set has a least support, 
the data symmetry \emph{admits least support}. 

It is shown in \cite{GP02} that
the equality symmetry and the total order symmetry admit least supports.
(Also see \cite[Corollaries 9.4 and 9.5]{BKL14}.)
In the integer symmetry, every $G_{{\mathbb Z}}$-set is nominal by the following reason.
If a translation $i \mapsto i+c$ on ${\mathbb Z}$
does not move an integer $z\in {\mathbb Z}$, the translation must be the identity.
Hence, any element $x$ of any $G_{{\mathbb Z}}$-set is supported by a singleton set of an arbitrary integer.
For the same reason, the integer symmetry does not admit least support.

For a function $f:X \to Y$ and a subset $Z \subseteq X$,
we write the function whose domain is restricted to $Z$ as $f|_Z$.
For functions $f:X \to Y$ and $g:Y' \to Z$,
we define $fg:X' \to Z$ as $fg(x) = g(f(x))$ where
$X' = \{ x \in X \mid f(x) \in Y' \}$.

Let $(\D,G)$ be a data symmetry that admits least support, $C \subseteq \D$ be a finite and 
fungible\footnote{
A finite set $C \subseteq \D$ is \emph{fungible} if for every $c \in C$ 
there exists a $\pi \in G$ such that 
$\pi(c) \neq c$ and $\pi(c') = c'$ for all $c' \in C \setminus \{c\}$.
Fungibility is a technical condition guaranteeing that
$\CSec$ is a single orbit nominal set, but
it is not directly related to the paper.
}
set and
$S \leq \Sym(C)$ be a permutation group on $C$.
For injective functions $u$ and $v$ from $C$ to $\D$ that extend to permutations from $G$
(i.e., $u = \pi|_C$ and $v = \sigma|_C$ for some $\pi,\sigma \in G$),
we define $u \equiv_S v$ if and only if $uv^{-1} \in S$
(which equivalently means $\tau u = v$ for some $\tau \in S$).
It is easy to see that $\equiv_S$ is an equivalent relation,
and thus $\equiv_S$ divides the set of all injections from $C$ to $\D$
that extend to permutations from $G$ into equivalent classes.
The equivalent class of $u$ defined by $\equiv_S$ is written as $[u]_S$.
For these $C$ and $S$, the $G$-set $\CSec$ is defined as
the set of all equivalent classes defined by $\equiv_S$,
i.e. $\CSec = \{ [\pi|_C]_S \mid \pi \in G \}$,
where the $G$-action is defined as $[u]_S \cdot \pi = [u \pi]_S$ for all $\pi \in G$.
$S$ is called a local symmetry.
As we noted before, $C$ corresponds to the set of (canonical) data values in the registers.
By definition, an element of $\CSec$ is an equivalent class defined by $\equiv_S$ of an injection from $C$ to $\D$ that extends
to some permutation in $G$.
As shown in the example in the introduction, an automaton cannot distinguish between $C$ and
the set of data values $C'$ obtained from $C$ by any injection from $C$ to $\D$
which is consistent with $G$.
Such an injection $u: C \to \D$ represents
this change of data values in the registers from $C$ to $C'$, which is indistinguishable
from the automaton.

Next, we describe an intuitive meaning of $S$.
For example, if $S$ consists of only the identity, it means that the order of registers is relevant.
If $C=\{1,2,3\}$ and $S=\{ \id, a \}$ where $\id$ is the identity and $a$ swaps $1$ and $2$,
then $S$ means that the order between the first and second values are irrelevant.
Note that a standard register automaton corresponds to $S = \{ \mathit{id} \}$.

The following two propositions guarantee that
a single orbit nominal set and an equivariant function between them have 
finite representations.

\begin{proposition}[{\cite[Proposition 9.15]{BKL14}}]\label{prop:CSec}
	\begin{enumerate}
		\item $\CSec$ is a single orbit nominal set.
		\item Every single orbit nominal set is isomorphic to some $\CSec$.	
	\end{enumerate}
\end{proposition}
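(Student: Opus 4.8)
The plan is to prove the two items separately; item~(1) is a direct check from the definition of $\CSec$, and item~(2) is an orbit--stabilizer argument. For~(1) I would (i) confirm that the action $[u]_S\cdot\pi=[u\pi]_S$ is well defined on $\equiv_S$-classes and again lands in $\CSec$ --- if $\tau u=v$ with $\tau\in S$ then $\tau(u\pi)=v\pi$, and $u\pi$ still extends to an element of $G$; (ii) show $\CSec$ is a single orbit by checking that, for $[\pi|_C]_S,[\sigma|_C]_S\in\CSec$, the group element $\pi^{-1}\sigma$ sends the former to the latter, since $(\pi|_C)(\pi^{-1}\sigma)=\sigma|_C$; and (iii) show $\CSec$ is nominal by verifying that the finite set $\{\pi(c)\mid c\in C\}$ supports $[\pi|_C]_S$: any $\rho\in G$ fixing this set pointwise satisfies $(\pi|_C)\rho=\pi|_C$ as functions, hence fixes the class. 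None of this is hard once the composition conventions are pinned down.

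For~(2), let $X$ be a single orbit nominal set, fix $x_0\in X$, and let $C$ be its least support (this exists: $X$ is nominal and the data symmetry admits least support). Write $G_{x_0}=\{\pi\in G\mid x_0\cdot\pi=x_0\}$ for the stabilizer of $x_0$. The first key step is to show that every $\pi\in G_{x_0}$ maps $C$ onto itself: using the standard fact that $\sigma(C)$ supports $x\cdot\sigma$ whenever $C$ supports $x$, the set $\pi^{-1}(C)$ supports $x_0\cdot\pi^{-1}=x_0$, so minimality forces $C\subseteq\pi^{-1}(C)$ and hence (finiteness) $C=\pi^{-1}(C)$. Therefore restriction is a group homomorphism $G_{x_0}\to\Sym(C)$; let $S$ be its image. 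Moreover $C$ is fungible, since if some $c\in C$ were fixed by every $\pi$ that is the identity on $C\setminus\{c\}$, then $C\setminus\{c\}$ would already support $x_0$, contradicting minimality. Hence $\CSec$ is well defined for this $C$ and $S$.

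I would then define $\phi\colon\CSec\to X$ by $\phi([\pi|_C]_S)=x_0\cdot\pi$ and check that it is an isomorphism of nominal $G$-sets. Equivariance is immediate from $(\pi|_C)\rho=(\pi\rho)|_C$, and surjectivity from $X$ being a single orbit. Both well-definedness and injectivity of $\phi$ rest on one observation: if $\alpha,\beta\in G$ agree on $C$ then $\beta\alpha^{-1}$ is the identity on $C$ and hence lies in $G_{x_0}$, so $x_0\cdot\alpha=x_0\cdot\beta$. Combining this with the definition of $S$ as the restricted stabilizer, $[\pi|_C]_S=[\sigma|_C]_S$ holds iff $\sigma$ and $\rho\pi$ agree on $C$ for some $\rho\in G_{x_0}$, iff $x_0\cdot\sigma=x_0\cdot\pi$, which is exactly $\phi([\pi|_C]_S)=\phi([\sigma|_C]_S)$.

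The main obstacle --- amid otherwise routine bookkeeping --- is step~(2)'s claim that the stabilizer $G_{x_0}$ preserves $C$ setwise, because this is what makes the restriction map $G_{x_0}\to\Sym(C)$, and hence $S$ and $\phi$, meaningful; it is also the only place where minimality of the support is genuinely used (together with the derived fungibility of $C$, which is needed merely to invoke the definition of $\CSec$). A secondary source of friction will simply be keeping the two composition conventions --- $fg(x)=g(f(x))$ for functions and $x\cdot(\pi\sigma)=(x\cdot\pi)\cdot\sigma$ for the action --- consistent throughout, in particular when passing between the two descriptions ``$uv^{-1}\in S$'' and ``$\tau u=v$'' of $\equiv_S$.
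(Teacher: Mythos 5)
Your proof is correct, and it is essentially the standard argument behind the cited result: the paper itself gives no proof of this proposition (it is quoted from \cite[Proposition 9.15]{BKL14}), and your orbit--stabilizer construction --- taking the least support $C$ of a fixed $x_0$, showing the stabilizer preserves $C$ setwise and restricting it to obtain $S$, deriving fungibility of $C$ from minimality, and checking that $[\pi|_C]_S \mapsto x_0\cdot\pi$ is a well-defined equivariant bijection --- is precisely the route taken in that reference. The only points worth stating explicitly in a write-up are the two facts you already isolate: that $\sigma(C)$ supports $x\cdot\sigma$, and that permutations agreeing on $C$ act identically on $x_0$; with the paper's composition conventions both check out as you use them.
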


$\CSec$ is called a \emph{support representation} of a single orbit nominal set.
The following proposition can be shown by \cite[Proposition 9.16]{BKL14}.
\begin{proposition}\label{prop:func}
	Let $X = \CSec$ and $Y = \DTec$ be single orbit nominal sets.
	For every equivariant function $f:X \to Y$,
	there is an injection $u$ from $D$ to $C$ satisfying $uS \subseteq Tu$ and
	$f([\pi|_C]_S) = [u]_T \cdot \pi$, for all $\pi \in G$,
	where $uS = \{ us \mid s \in S \}$ and $Tu = \{ tu \mid t \in T \}$.
	Conversely, for every injection $u:D \to C$ satisfying $uS \subseteq Tu$,
	$f([\pi|_C]_S) = [u]_T \cdot \pi$ is an equivariant function from $X$ to $Y$.
\end{proposition}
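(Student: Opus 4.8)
This is, up to the change of notation, \cite[Proposition~9.16]{BKL14}; below I sketch the orbit--stabilizer argument underlying it. Since $X=\CSec$ is single orbit it has the base point $x_{0}=[\id_{C}]_{S}$, every element of $X$ is of the form $x_{0}\cdot\pi=[\pi|_{C}]_{S}$, and $x_{0}\cdot\pi=x_{0}$ holds if and only if $\pi|_{C}\in S$; because $C$ is fungible, the least support of $x_{0}$ is exactly $C$. Symmetrically, $Y=\DTec$ has base point $y_{0}=[\id_{D}]_{T}$ with least support $D$, and for each $\sigma\in G$ the least support of $y_{0}\cdot\sigma=[\sigma|_{D}]_{T}$ equals $\sigma(D)$; in particular all representatives of a fixed element of $Y$ have the same image, namely its least support. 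I will also use the standard fact that an equivariant map does not enlarge least supports.

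\emph{Forward direction.} An equivariant $f\colon X\to Y$ is determined by $f(x_{0})$, because $f([\pi|_{C}]_{S})=f(x_{0}\cdot\pi)=f(x_{0})\cdot\pi$. Fix a representative $u$ with $f(x_{0})=[u]_{T}$. Its image is the least support of $f(x_{0})$, which by the previous paragraph is contained in the least support $C$ of $x_{0}$; hence $u$ may be regarded as an injection $D\to C$, and it still extends to a permutation of $G$, being a restriction of the original one. It remains to verify $uS\subseteq Tu$. Given $s\in S$, pick $\pi\in G$ with $\pi|_{C}=s$ (every local-symmetry element extends, by the construction of $\CSec$). From $\pi|_{C}\in S$ we get $x_{0}\cdot\pi=x_{0}$, hence $[u]_{T}=f(x_{0})=f(x_{0})\cdot\pi=[u\pi]_{T}$, i.e.\ $u\pi\equiv_{T}u$. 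Since $u$ has image in $C$, $u\pi$ equals $u$ followed by $s$, so $u\pi\equiv_{T}u$ unwinds to $us\in Tu$. Thus $uS\subseteq Tu$ and $f([\pi|_{C}]_{S})=[u]_{T}\cdot\pi$, as required.

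\emph{Converse direction.} Let $u\colon D\to C$ be an injection that extends to a permutation of $G$ and satisfies $uS\subseteq Tu$, and put $g([\pi|_{C}]_{S})=[u\pi]_{T}$. The one substantive point is well-definedness. If $[\pi|_{C}]_{S}=[\pi'|_{C}]_{S}$ then $\pi$ and $\pi'$ agree on $C$ up to a factor $\tau\in S$, and since $u$ has image in $C$ this makes $u\pi$ and $u\pi'$ differ only by the action of that element of $S$ composed after $u$; rewriting $u\tau\in uS$ as an element of $Tu$ via $uS\subseteq Tu$ shows that $u\pi$ and $u\pi'$ differ by the action of an element of $T$, hence $u\pi\equiv_{T}u\pi'$ and $[u\pi]_{T}=[u\pi']_{T}$. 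Equivariance is immediate from $[u(\pi\sigma)]_{T}=[u\pi]_{T}\cdot\sigma$, and $g$ really does land in $Y$ because $u\pi$ is the restriction to $D$ of an element of $G$, hence a legitimate representative.

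The step I expect to be the actual obstacle is the bookkeeping about support representations, not anything concerning $f$: one has to know that the least support of an arbitrary element $[\sigma|_{D}]_{T}$ of $\DTec$ is $\sigma(D)$ --- so that a representative of $f(x_{0})$ automatically has image inside $C$ --- and that every element of the local symmetry $S$ is the restriction to $C$ of a permutation in $G$. Both facts belong to the $\CSec$ construction and to data symmetries that admit least support; once they are available, the two directions collapse to the short equational manipulations above, which are exactly the orbit--stabilizer correspondence read through the representations $\CSec$ and $\DTec$.
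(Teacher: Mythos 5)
Your proposal is correct and takes essentially the same route as the paper's appendix proof: both are orbit--stabilizer arguments that determine $f$ by its value on the base point $[e|_C]_S$, turn the stabilizer containment into $uS \subseteq Tu$, and for the converse verify well-definedness via $uS \subseteq Tu$ and then equivariance by a one-line computation. The only (harmless) difference is that where the paper invokes \cite[Proposition~9.16]{BKL14} to take the representative $u$ with codomain $C$, you justify this directly through least supports and fungibility.
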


A proof of this proposition is given in the Appendix.

By Proposition~\ref{prop:func}, we can obtain a necessary and sufficient condition for
two single orbit nominal sets to be isomorphic in terms of a bijection between supports.
\begin{lemma}\label{lem:isomorphism}
	Single orbit nominal sets $\CSec$ and $\DTec$ are isomorphic if and only if
	there exists a bijection $u:D \to C$ satisfying $uS = Tu$ 
	that extends to a permutation from $G$.
\end{lemma}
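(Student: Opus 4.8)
The plan is to derive the characterization directly from Proposition~\ref{prop:func} by analyzing when an equivariant function can be an isomorphism, i.e., an equivariant bijection. For the forward direction, suppose $f:\CSec \to \DTec$ is an isomorphism. By Proposition~\ref{prop:func} applied to $f$, there is an injection $u:D\to C$ with $uS\subseteq Tu$ and $f([\pi|_C]_S)=[u]_T\cdot\pi$ for all $\pi\in G$. Applying the same proposition to $f^{-1}:\DTec\to\CSec$ yields an injection $w:C\to D$ with $wT\subseteq Sw$ and $f^{-1}([\sigma|_D]_T)=[w]_S\cdot\sigma$. Composing the two representations, $f^{-1}\circ f = \id_{\CSec}$ forces (at $\pi=e$) that $[wu]_S=[e|_C]_S$, i.e.\ $wu\in S$, and symmetrically $uw\in T$. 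Since $u$ and $w$ are injections between finite sets of the same cardinality (both $wu$ and $uw$ being permutations forces $|C|=|D|$), each is a bijection, and $w$ is, up to pre-/post-composition with elements of $S$ and $T$, an inverse of $u$. From here I would upgrade $uS\subseteq Tu$ to equality: given $t\in T$, use $uw\in T$ and $wT\subseteq Sw$ to write $t u = u(w t u)$ with $wtu \in wTu \subseteq Sw u \subseteq S\cdot S = S$ (using $wu\in S$), so $Tu \subseteq uS$, hence $uS=Tu$. Finally, because $u=\pi|_C$ for some $\pi\in G$ by the hypothesis of Proposition~\ref{prop:func} (the injections arising there extend to permutations of $G$), $u$ extends to a permutation from $G$, as required.

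For the converse, suppose $u:D\to C$ is a bijection with $uS=Tu$ extending to some permutation of $G$. In particular $uS\subseteq Tu$, so by the second (``conversely'') part of Proposition~\ref{prop:func}, $f([\pi|_C]_S)=[u]_T\cdot\pi$ defines an equivariant function $f:\CSec\to\DTec$. It remains to exhibit an equivariant inverse. The natural candidate is $w=u^{-1}:C\to D$; since $uS=Tu$ we get $wT = u^{-1}T = u^{-1}(uSu^{-1}) = Su^{-1} = Sw$, so in particular $wT\subseteq Sw$, and Proposition~\ref{prop:func} again gives an equivariant function $g([\sigma|_D]_T)=[w]_S\cdot\sigma$ from $\DTec$ to $\CSec$. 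A short computation using $wu=\id_C$ and $uw=\id_D$ shows $g\circ f=\id_{\CSec}$ and $f\circ g=\id_{\DTec}$ — e.g.\ $g(f([\pi|_C]_S)) = g([u]_T\cdot\pi) = g([u\pi']_T)$ for a representative, unwinding to $[wu\pi'']_S$ which reduces to $[\pi|_C]_S$. Hence $f$ is an isomorphism.

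The main obstacle I expect is purely bookkeeping: Proposition~\ref{prop:func} is stated only for a \emph{single} injection realizing a given equivariant function, and the representing injection is not canonical (it is determined only up to the coset actions of $S$ and $T$). So the delicate point is to verify that the two injections $u$ (for $f$) and $w$ (for $f^{-1}$) can be chosen to be genuine set-theoretic inverses, rather than merely inverse ``up to $S$ and $T$''. Concretely, one shows $wu\in S$ and then \emph{replaces} $w$ by $(wu)^{-1}w$, which still represents $f^{-1}$ (because $(wu)^{-1}\in S$ and left-multiplying the representative by an element of $S$ does not change its $\equiv_S$-class-valued output, by equivariance and the definition of $[\,\cdot\,]_S$) and now satisfies $w u = \id_C$; since $|C|=|D|$ this makes $w=u^{-1}$ outright. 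Once this normalization is in place, the equality $uS=Tu$ and the mutual-inverse computations are routine, and the extension-to-$G$ clause is inherited directly from Proposition~\ref{prop:func} in one direction and assumed in the other.
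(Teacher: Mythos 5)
Your proposal is correct and takes essentially the same route as the paper: both directions rest on Proposition~\ref{prop:func} applied to $f$ and $f^{-1}$, extracting injections $u$ and $w$ with $wu\in S$ and $uw\in T$, from which bijectivity and the upgrade of $uS\subseteq Tu$ to $uS=Tu$ follow, and the converse builds the inverse map from $u^{-1}$ exactly as the paper does. The only difference is presentational: where you normalize $w\mapsto (wu)^{-1}w$ to make $w=u^{-1}$ before computing, the paper works directly with the cosets ($Tu=uvTu\subseteq uSvu\subseteq uS$ using $uv\in T$ and $vu\in S$), which is the same algebra without the normalization step.
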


\begin{proof}
	Assume that there exists a bijection $u:D \to C$ satisfying $uS = Tu$ 
	that extends to a permutation from $G$.
	Then, we have $uS \subseteq Tu$ and $u^{-1}T \subseteq Su^{-1}$.
	By Proposition \ref{prop:func}, we have two functions
	$f:\CSec \to \DTec$ and $g:\DTec \to \CSec$ such that
	\begin{align*}
		f([\pi|_C]_S) &= [u]_T \cdot \pi \\
		g([\sigma|_D]_T) &= [u^{-1}]_S \cdot \sigma.
	\end{align*}
	We show that $g$ is the inverse of $f$.
	From the assumption on $u$, there is some $\rho \in G$ satisfying $\rho|_D = u$
	(and $\rho^{-1}|_C = u^{-1}$).
	Thus, $f([\pi|_C]_S) = [u]_T \cdot \pi = [\rho|_D]_T \cdot \pi = [(\rho \pi)|_D]_T$.
	Substituting this into the definition of $g$ yields
	$g([(\rho\pi)|_D]_T) = [u^{-1}]_S \cdot (\rho \pi) = [\rho^{-1}|_C]_S \cdot (\rho\pi)
	= [(\rho^{-1}\rho\pi)|_C]_S = [\pi|_C]_S$.
	We have $g(f([\pi|_C]_S)) = [\pi|_C]_S$, and thus $g$ is the inverse of $f$.
	Therefore, $\CSec$ and $\DTec$ are isomorphic.

	Conversely, assume that $\CSec$ and $\DTec$ are isomorphic, i.e.,
	there exists some equivariant bijection $f$ from $\CSec$ to $\DTec$.
	By Proposition \ref{prop:func}, $f$ can be written as 
	$f([\pi|_C]_S) = [\sigma|_D]_T \cdot \pi$ for some $\sigma \in G$,
	such that there is an injection $u:D \to C$ satisfying 
	$\sigma|_D = u$ and $uS \subseteq Tu$.
	Also by Proposition \ref{prop:func}, $f^{-1}$ can be written as
	$f^{-1}([\pi|_D]_T) = [\rho|_C]_S \cdot \pi$ for some $\rho \in G$,
	such that there is an injection $v:C \to D$ satisfying
	$\rho|_C = v$ and $vT \subseteq Sv$.
	From this, we can derive
	$f^{-1}([\sigma|_D]_T \cdot \pi) = f^{-1}([(\sigma \pi)|_D])
		= [\rho|_C]_S \cdot (\sigma \pi)$.
	By $f([\pi|_C]_S) = [\sigma|_D]_T \cdot \pi$,
	we have $[\rho|_C]_S \cdot (\sigma \pi) = [\pi|_C]_S$, and hence
	$[\rho|_C]_S = [\pi|_C]_S \cdot (\sigma \pi)^{-1}
		= [(\sigma|_C)^{-1}]_S$.
	This means that $[v]_S = [u^{-1}]_S$.
	Thus, we have $vu \in S$.
	In the same way, we have $uv \in T$.
	By acting $u$ on $vT \subseteq Sv$, we have $uvTu \subseteq uSvu$, and hence
	we have $Tu \subseteq uS$ by $vu \in S$ and $uv \in T$.
	By $uS \subseteq Tu$ and $Tu \subseteq uS$, $uS = Tu$.
	\qed
\end{proof}

Let $X$ and $Y$ be nominal sets.
We define $Y \preceq X$ if and only if
there is an equivariant surjection from a subset of $X$ to $Y$.
If $Y \preceq X$ and $X$ and $Y$ are not isomorphic, $Y \prec X$.
We show that there is no infinite chain between two orbit finite nominal sets
$X$ and $Y$ such that $Y \preceq X$.
This property is used for proving the termination of the proposed learning algorithm.
We start with $X$ and $Y$ being single orbits.

\begin{lemma}\label{lem:finiteness}
	Let $(\D,G)$ be a data symmetry that admits least support.
	Then, for any two single orbit nominal sets $X$ and $Y$ such that $Y \prec X$,
	the length of any sequence of single orbit nominal sets $X_1,X_2,\ldots$ satisfying
	\[
		Y \prec X_1 \prec X_2 \prec \cdots \prec X
	\]
	is finite.
\end{lemma}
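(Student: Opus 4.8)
The plan is to translate the order $\prec$ on single orbit nominal sets into a combinatorial order on their support representations, using Lemma~\ref{lem:isomorphism} and Proposition~\ref{prop:func}. By Proposition~\ref{prop:CSec}, we may write $X = \lec C_X, S_X \rec$ and $X_i = \lec C_i, S_i \rec$ and $Y = \lec C_Y, S_Y \rec$. Fix a common finite support; more precisely, I would first argue that $Y \preceq X$ forces (via the equivariant surjection from a subset of $X$) the existence of an injection relating the supports, so that without loss of generality all the $C_i$ can be taken to be subsets of a fixed finite set $C_X$, with $|C_Y| \le |C_i| \le |C_i'| \le \cdots \le |C_X|$ along the chain. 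The key point is that an equivariant surjection $X' \to X_{i}$ from a subset $X' \subseteq X_{i+1}$ yields, on a single orbit, an injection $u$ on supports with $u S_i \subseteq S_{i+1} u$ (the direction of inclusion coming from Proposition~\ref{prop:func}), while $X_i \not\cong X_{i+1}$ rules out equality in the sense of Lemma~\ref{lem:isomorphism}.

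Next I would set up a well-founded measure on the pairs $(C_i, S_i)$ that strictly decreases (or increases, depending on orientation) along the chain. The natural candidate is the lexicographic combination of (a) the cardinality $|C_i|$, which is bounded between $|C_Y|$ and $|C_X|$ and is monotone along the chain, and (b) for a fixed support size, the index or order of the local symmetry group $S_i \le \Sym(C_i)$. When $|C_i| = |C_{i+1}|$, the embedding $u$ relating them is a bijection, and $u S_i \subseteq S_{i+1} u$ together with non-isomorphism forces $u S_i u^{-1} \subsetneq S_{i+1}$ — a proper subgroup inclusion inside the finite group $\Sym(C_X)$. Since a strictly increasing chain of subgroups of a fixed finite symmetric group has bounded length, and since the support size can only change finitely often (it lives in the finite interval $[|C_Y|, |C_X|]$), the whole chain is finite.

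The main obstacle I expect is the bookkeeping needed to realize all the $X_i$ over a \emph{single} fixed support set $C_X$ simultaneously and coherently: a priori each $\prec$ step only gives an injection $C_{i} \hookrightarrow C_{i+1}$ (up to the $\equiv_S$ slack and up to extension to a permutation in $G$), and one must compose these and check that the composed maps still satisfy the $uS \subseteq Tu$ condition and still extend to permutations from $G$ — this is exactly where the hypothesis that $(\D,G)$ admits least support is used, since least supports give canonical representatives that make the composition well-defined and the support sizes genuinely comparable. A secondary subtlety is getting the direction of the subgroup inclusion right: I must be careful that a strict $\prec$ (surjection onto the \emph{smaller} object) corresponds to the local symmetry \emph{growing}, so that the measure I claim is monotone really is. Once these two points are pinned down, the finiteness is immediate from the two obvious facts that $\{|C| : |C_Y| \le |C| \le |C_X|\}$ is finite and that $\Sym(C_X)$ has no infinite strictly increasing subgroup chain.
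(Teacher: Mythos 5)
Your overall route is the same as the paper's: pass to support representations via Proposition~\ref{prop:CSec}, extract from each strict $\prec$-step an injection between supports via Proposition~\ref{prop:func}, rule out equality via Lemma~\ref{lem:isomorphism}, and bound the chain by the lexicographic measure (support cardinality, order of the local symmetry group), both of which range over finite sets once $X$ is fixed. However, the one concrete inclusion you assert is backwards. Applying Proposition~\ref{prop:func} to the equivariant surjection from $X_{i+1}=\lec C_{i+1},S_{i+1}\rec$ onto $X_i=\lec C_i,S_i\rec$ (whose domain is automatically all of $X_{i+1}$, a nonempty equivariant subset of a single-orbit set being the whole set) gives an injection $u:C_i\to C_{i+1}$ with $uS_{i+1}\subseteq S_i u$, not $uS_i\subseteq S_{i+1}u$. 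Hence in the equal-cardinality case, non-isomorphism forces $S_{i+1}$ to embed as a \emph{proper} subgroup of a conjugate of $S_i$: the local symmetry \emph{shrinks} as you go up the chain, equivalently the smaller (quotient) object carries the \emph{larger} group --- this is exactly the paper's conclusion $|S|<|T|$ for $\DTec\prec\CSec$. A sanity check: the forgetful map $\lec C,\{\id\}\rec\to\lec C,\Sym(C)\rec$ is an equivariant surjection, and here the target has the bigger group. The finiteness conclusion is robust to this flip --- support sizes are weakly increasing along the chain and bounded by $|C_X|$, and within a run of equal support sizes the group orders are strictly monotone and lie between $1$ and $|C_X|!$ --- so once the direction is corrected your measure still terminates, but as written that step is false and the claimed monotonicity of your measure is the wrong way around.

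A secondary point: the bookkeeping you anticipate (realizing all the $X_i$ over one fixed support $C_X$ and composing the injections coherently) is not needed. The paper treats each consecutive pair independently: each step yields $|C_i|\le |C_{i+1}|$, the final step yields a bound by $|C_X|$, and ties are broken by the strict group-order comparison; no composite maps have to be verified to satisfy $uS\subseteq Tu$ or to extend to permutations of $G$. Relatedly, the hypothesis that the symmetry admits least support is not what makes such compositions well-defined; its role is to make Proposition~\ref{prop:CSec} available at all, i.e., to guarantee that every single-orbit nominal set has a support representation $\CSec$ on which the counting argument can be run.
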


\begin{proof}
	By Proposition \ref{prop:CSec}, it suffices to show the lemma for
	$X = \CSec$ and $Y = \DTec$.
	Assume $\DTec \prec \CSec$.
	By definition of $\prec$, there exists an equivariant surjective function
	from a subset of $\CSec$ to $\DTec$.
	The domain of this function is $\CSec$ because $\CSec$ and $\DTec$ are single orbit sets.
	Thus, by Proposition \ref{prop:func}, there exists an injection $u:D \to C$
	satisfying $uS \subseteq Tu$ that extends to a permutation from $G$.
	Because $u$ is an injection, $|D| \leq |C|$ holds 
	where $|D|$ is the number of elements of $D$.
	If $|D| < |C|$, no injections from $D$ to $C$ are bijections, and hence
	$\CSec$ and $\DTec$ are not isomorphic by Lemma \ref{lem:isomorphism}.
	If $|D| = |C|$, then $uS \subsetneq Tu$ must hold 
	because $\CSec$ and $\DTec$ are not isomorphic.
	Thus, we have $S < u^{-1}Tu$.
	This means that $S$ is a proper subgroup of $u^{-1}Tu$ and thus $|S| < |T|$.
	Therefore, because $C$ and $D$ are finite sets and $S$ and $T$ are finite groups,
	the length of any sequence of single orbit nominal sets 
	$X_1,X_2,\ldots$ satisfying
	\[
		Y \prec X_1 \prec X_2 \prec \cdots \prec X
	\]
	is finite.
	\qed
\end{proof}

\begin{lemma}\label{lem:finiteness-f}
	Let $(\D,G)$ be a data symmetry that admits least support.
	Then, for any two orbit finite nominal sets $X$ and $Y$,
	the length of any sequence of orbit finite nominal sets $X_1,X_2,\ldots$ satisfying
	\[
		Y \prec X_1 \prec X_2 \prec \cdots \prec X
	\]
	is finite.
\end{lemma}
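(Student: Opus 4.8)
The plan is to lift Lemma~\ref{lem:finiteness} from single orbits to orbit finite sets by decomposing each member of the chain into its orbits and tracking how those orbits evolve. Suppose for contradiction that there is an infinite strictly ascending chain $X_1 \prec X_2 \prec \cdots$ of orbit finite nominal sets with $X_i \prec X$ for all $i$. First I would record two elementary facts: (i) the image of a single orbit under an equivariant map is a single orbit, and a single orbit has no equivariant subset other than $\emptyset$ and itself; hence (ii) if $Z \preceq W$ then every orbit of $Z$ is $\preceq$ some single orbit of $W$, and the number of orbits of $Z$ is at most that of $W$. In particular the number of orbits is non-decreasing along the chain and bounded by that of $X$, so it is eventually constant; discarding a finite prefix I may assume every $X_i$ has exactly $m$ orbits, $X_i = P^{(i)}_1 \sqcup \cdots \sqcup P^{(i)}_m$.

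Next I would analyse the witness of $X_i \preceq X_{i+1}$: an equivariant surjection $f_i$ from a union $W$ of orbits of $X_{i+1}$ onto $X_i$. By (i) each orbit $O$ of $W$ has $f_i(O)$ a single orbit of $X_i$, and these images cover all $m$ orbits of $X_i$; since $W$ has at most $m$ orbits, it must have exactly $m$ (so $W = X_{i+1}$) and $O \mapsto f_i(O)$ is a bijection between orbit sets. Relabelling, there is a permutation $h_i$ of $\{1,\dots,m\}$ with $P^{(i)}_a \preceq P^{(i+1)}_{h_i(a)}$ for all $a$. Composing these, for each $a$ I obtain a \emph{thread} $R^a_1 \preceq R^a_2 \preceq \cdots$ with $R^a_i = P^{(i)}_{(h_{i-1}\circ\cdots\circ h_1)(a)}$; this is a $\preceq$-chain of single orbit nominal sets, each $\preceq$ some single orbit of $X$, and at every level $i$ the family $\{R^a_i : a\}$ is exactly the set of orbits of $X_i$.

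Then I would argue: (a) each thread makes only finitely many strict steps, and (b) if $X_i \not\cong X_{i+1}$ then some thread is strict at step $i$ — otherwise the orbit-wise isomorphisms assemble into an equivariant bijection $X_i \cong X_{i+1}$. Combining (a), (b) with the fact that there are $m$ threads but infinitely many steps yields the contradiction. For (a) I would either extract a genuine $\prec$-subchain from the strict steps and invoke Lemma~\ref{lem:finiteness} — this needs that $\preceq$ is antisymmetric up to isomorphism on single orbits (a cardinality argument from Proposition~\ref{prop:func} and Lemma~\ref{lem:isomorphism}, in the same spirit as the proof of Lemma~\ref{lem:finiteness}) together with pigeonholing the thread members against the finitely many orbits of $X$ — or, more directly, attach to each single orbit $\CSec$ the pair $(|C|,|S|)$, note via Proposition~\ref{prop:func}/Lemma~\ref{lem:isomorphism} that it is weakly monotone under $\preceq$ and strictly lexicographically monotone under $\prec$, and observe that along a thread both coordinates stay bounded ($|C|$ by the largest least-support size among orbits of $X$, and then $|S| \le |C|!$), so the pair is eventually constant and the thread eventually isomorphism-constant.

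The main obstacle is step (a): converting "infinitely many strict steps inside a $\preceq$-chain" into something Lemma~\ref{lem:finiteness} can reject. The delicate point is that consecutive members of a thread satisfy only $\preceq$, not $\prec$, so without the antisymmetry observation (or the bounded monotone invariant) the thread cannot be collapsed to a strict chain; a secondary, purely combinatorial point is justifying that the witnessing surjections restrict to orbit-wise bijections, which is exactly where the two elementary single-orbit facts are needed.
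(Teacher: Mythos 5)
Your proof is correct and takes the same route as the paper: decompose each member of the chain into its finitely many orbits and reduce to the single-orbit Lemma~\ref{lem:finiteness}. The paper's own proof of this lemma is only a one-sentence assertion that the reduction is obvious, so the details you supply (the orbit-wise bijections induced by the witnessing surjections, the thread construction, and the antisymmetry/bounded lexicographic invariant limiting the strict steps of each thread) are precisely the justification the paper leaves implicit rather than a genuinely different approach.
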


\begin{proof}
	Any orbit finite nominal set is an union of a finite number of single orbit nominal sets.
	Hence, this lemma obviously holds by Lemma \ref{lem:finiteness}.
	\qed
\end{proof}

\subsection{Data tree}
Let $(\D,G)$ be a data symmetry and $A$ be an alphabet.
We define an \emph{$m$-ary data tree} (simply \emph{tree}) over $A$ 
as a function $t:\Pos(t) \to A$ 
satisfying the following two conditions:
\begin{itemize}
	\item $\Pos(t) \subseteq \{1,\ldots,m\}^*$ is a non-empty finite set
		  that is prefix-closed, and
	\item every $p \in \Pos(t)$ has a non-negative integer $\arity(p) \leq m$ satisfying
		  \[
			  p\,i \in \Pos(t) \qquad \text{for all $i \in \{1 ,\ldots, \arity(p)\}$},
		  \]
\end{itemize}
where $p \ i$ is the concatenation of $p$ and $i$.
The set of all $m$-ary data trees over $A$ is written as $\Tree_m(A)$.

We define subtree $t|_p$ of $t$ at $p \in \Pos(t)$ as
\begin{itemize}
	\item $\Pos(t|_p) = \{ q \in \{1,\ldots,m\}^* \mid pq \in \Pos(t)\}$, and
	\item $t|_p(q) = t(pq)$ for all $q \in \Pos(t|_p)$.
\end{itemize}
The set of all subtrees of $t$ is written as $\Subtree(t)$.
To denote a tree, we will use term representation, which is recursively defined as follows.
For $a \in A$ and terms $\trm_1,\ldots,\trm_k$ with $0 \leq k \leq m$,
the term $a(\trm_1,\ldots,\trm_k)$ represents the tree $t$ such that
$\arity(\varepsilon)=k$ and
\[
	t(p) = \begin{cases}
				a      & \text{if $p = \varepsilon$}, \\
				t_i(q) & \text{if $p = iq$ \quad for $1 \leq i \leq k$},
		   \end{cases}
\]
where $t_i$ is the tree represented by $\trm_i$ for $1 \leq i \leq k$.

The group action on $\Tree_m(A)$ is defined as 
$t \cdot \pi = (a \cdot \pi)(t_1 \cdot \pi,\ldots,t_k \cdot \pi)$
for all $t \in \Tree_m(A)$ and $\pi \in G$.
$\Tree_m(A)$ is a nominal set.

Let $x \notin A$ be a variable.
A tree $t \in \Tree_m(A \cup \{x\})$ is called a \emph{context} of $A$
if and only if there is exactly one $p \in \Pos(t)$ satisfying
$t(p) = x$ and $\arity(p) = 0$.
The set of all contexts of $A$ is written as $\Context_m(A)$.
For $s \in \Context_m(A)$ and $t \in (\Tree_m(A) \cup \Context_m(A))$,
we define $s[t]$ as a tree with $x$ in $s$ replaced by $t$, i.e.,
\[
	s[t](p) = \begin{cases}
			  		s(p) & \text{if $p \in \Pos(s)$ and $s(p) \neq x$}, \\
					t(q) & \text{if $p = rq,s(r) = x$ and $q \in \Pos(t)$}.
			  \end{cases}
\]
For $S \subseteq \Context_m(A)$ and $T \subseteq (\Tree_m(A) \cup \Context_m(A))$,
we define $S[T] = \{ s[t] \mid \text{$s \in S$ and $t \in T$} \}$.
For all $\pi \in G$, we define $x \cdot \pi = x$.

\begin{example}
	Let $c \in \Context_m$ be a context of $\N$ such that 
	$\Pos(c) = \{ \varepsilon,1\}$, $c(\varepsilon)=2$, $c(1)=x$.
	Let $t \in \Tree_2(\N)$ be a $2$-ary tree over $\N$ such that 
	$\Pos(t) = \{ \varepsilon,1,2\}$, $t(\varepsilon)=3$, $t(1)=1$, $t(2)=5$.
	For $c$ and $t$, $c[t]$ is the tree such that 
	$\Pos(c[t]) = \{\varepsilon,1,11,12\}$, $c[t](\varepsilon)=2$, $c[t](1)=3$, $c[t](11)=1$, $c[t](12)=5$.
	Figure \ref{fig:tree} illustrates $c$, $t$ and $c[t]$.
	Term representations of $c$, $t$ and $c[t]$ are $c=2(x)$, $t=3(1,5)$ and $c[t]=2(3(1,5))$, respectively.
	The term representation of subtree $c[t]|_{12}$ of $c[t]$ at $12 \in \Pos(c[t])$ is $5$.
	\begin{figure}\label{fig:tree}
		\centering
		\begin{tikzpicture}[node distance=8mm]
			\node (a) {$c=$};
			\node[right of=a,node distance=5mm] (b) {$ \begin{forest} [2 [$x$]] \end{forest}$};
			\node[right of=b,node distance=15mm] (c) {$t=$};
			\node[right of=c] (d) {$ \begin{forest} [3 [1] [5]] \end{forest}$};
			\node[right of=d,node distance=15mm] (e) {$c[t]=$};
			\node[right of=e,node distance=10mm] (f) {$ \begin{forest} [2 [3 [1] [5]]] \end{forest}$};
		\end{tikzpicture}
		\caption{Data trees and a context over $\N$.}
	\end{figure}
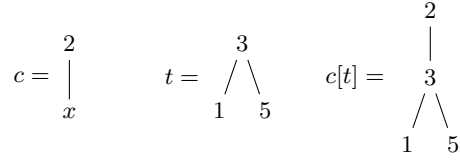
\end{example}

\section{Deterministic bottom-up nominal tree automata}
	Let $(\D,G)$ be a data symmetry and $A$ be an alphabet.
A \emph{deterministic bottom-up tree automaton} ($G$-DBTA)
over $\Tree_m(A)$ is a triple $\A = (Q,F,\delta)$, where
\begin{itemize}
	\item $Q$ is a $G$-set of states,
	\item $F \subseteq Q$ is an equivariant set of accept states, and
	\item $\delta = (\delta_0,\ldots,\delta_m)$ is an
	      {$m+1$}-tuple of equivariant transition functions, where
		  \begin{gather*}
		  		\delta_0:A \to Q, \\
				\delta_k:A \times Q^k \to Q \quad \text{for $1 \leq k \leq m$}.
		  \end{gather*}
\end{itemize}
We extend $\delta$ to the function on $\Tree_m(A)$ by
\[
	\delta(a(t_1,\ldots,t_k)) 
		= \begin{cases}
				\delta_k(a,\delta(t_1),\ldots,\delta(t_k)) & \text{if $k>0$}, \\
				\delta_0(a)                                & \text{if $k=0$}.
		  \end{cases}
\]
A tree $t$ is accepted by $\A$ if and only if $\delta(t) \in F$.
We define $L(\A) = \{ t \in \Tree_m(A) \mid \delta(t) \in F \}$.
We call $L \subseteq \Tree_m(A)$ a \emph{recognizable tree language}
when there exists a $G$-DBTA $\A$ satisfying $L = L(\A)$.
A $G$-DBTA $\A = (Q,F,\delta)$ is \emph{reachable} if 
for all $q \in Q$, there exists some $t \in \Tree_m(A)$ such that $\delta(t) = q$.
If $A$ and $Q$ are orbit finite nominal sets, then $\A$ is called 
a \emph{deterministic bottom-up nominal tree automaton} ($G$-DBNTA).
We call $L \subseteq \Tree_m(A)$ a \emph{recognizable nominal tree language}
when there exists a $G$-DBNTA $\A$ satisfying $L = L(\A)$.

Let $X \subseteq \Tree_m(A)$ be a subset of trees.
For a function $T:X \to \{0,1\}$, a $G$-DBTA $\A$ is \emph{consistent with} $T$
if for all $t \in X$, $t \in L(\A) \Leftrightarrow T(t) = 1$.

\begin{example}
	Let the set $\N$ of natural numbers be an alphabet and $(\N,\Sym(\N))$ be a data symmetry.
	Let $\A = (Q,F,\delta)$ be a $G$-DBTA over $\Tree_2(\N)$, where
	$Q=\N \cup \{\accept,\reject\}$, $F=\{\accept\}$ and
	$\delta=(\delta_0,\delta_1,\delta_2)$ such that
	$\delta_0(d)=d$, $\delta_1(d,q)=\reject$ and
	$\delta_2(d,q_1,q_2)=\accept$ if $q_1=q_2=d$, $\reject$ otherwise.
	For all $\pi \in \Sym(\N)$, we define $\accept \cdot \pi = \accept$
	and $\reject \cdot \pi = \reject$.
	It is easy to see that all components of $\A$ are equivariant.
	The tree language recognized by $\A$ is
	$L(\A) = \{ d(d,d) \mid d \in \N\}$.
\end{example}

\section{Myhill-Nerode theorem}
	Let $(\D,G)$ be a data symmetry that admits least support and $A$ be a nominal alphabet.
For $L \subseteq \Tree_m(A)$,
we define the binary relation $\approx_L$ over $\Tree_m(A)$ as follows: 
$u \approx_L v$ if and only if
\[
	\text{$C[u] \in L$ iff $C[v] \in L$ for all $C \in \Context_m(A)$}.
\]
It is easy to check that $\approx_L$ is an congruence relation on $\Tree_m(A)$,
i.e., $\approx_L$ is an equivalent relation that satisfies
$a(u_1,\ldots,u_k) \approx_L a(v_1,\ldots,v_k)$ for all
$a \in A$ and $u_1,\ldots,u_k,v_1,\ldots,v_k \in \Tree_m(A)$ 
with $u_i \approx_L v_i$ for $0 \leq i \leq k$.
We write the equivalent class of $t \in \Tree_m(A)$ as $[t]$.

\begin{lemma}\label{lem:approx}
	If $L \subseteq \Tree_m(A)$ is equivariant, then $\approx_L$ is also equivariant.
\end{lemma}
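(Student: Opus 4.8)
The plan is to reduce equivariance of $\approx_L$ to the single observation that the $G$-action commutes with context substitution. Concretely, I would first prove the auxiliary fact that for every context $C \in \Context_m(A)$, every $t \in \Tree_m(A) \cup \Context_m(A)$, and every $\pi \in G$, the tree $C \cdot \pi$ is again a context and
\[
	(C[t]) \cdot \pi = (C \cdot \pi)[t \cdot \pi].
\]
This follows by structural induction on $C$ using the recursive definitions of the $G$-action on $\Tree_m(A)$ and of the substitution $s[\cdot]$, together with the stipulation $x \cdot \pi = x$ (which is what guarantees $C \cdot \pi$ still has exactly one leaf labelled $x$ of arity $0$, since $\pi$ does not change the shape of the tree, only its labels). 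A corollary, obtained from the group action axioms, is that $C \mapsto C \cdot \pi$ is a bijection of $\Context_m(A)$ with inverse $C \mapsto C \cdot \pi^{-1}$, and $(C \cdot \pi^{-1}) \cdot \pi = C$.

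Next I would record that equivariance of $L$ can be used in the form $s \in L \Leftrightarrow s \cdot \pi \in L$: the forward implication is the definition, and the reverse one follows by applying it to $s \cdot \pi$ and $\pi^{-1}$. Now suppose $u \approx_L v$ and fix $\pi \in G$; the goal is $u \cdot \pi \approx_L v \cdot \pi$, i.e., $C[u \cdot \pi] \in L \Leftrightarrow C[v \cdot \pi] \in L$ for every $C \in \Context_m(A)$. Given such a $C$, set $C' = C \cdot \pi^{-1} \in \Context_m(A)$. By the auxiliary fact, $C'[u] \cdot \pi = (C' \cdot \pi)[u \cdot \pi] = C[u \cdot \pi]$, so equivariance of $L$ gives $C[u \cdot \pi] \in L \Leftrightarrow C'[u] \in L$, and symmetrically $C[v \cdot \pi] \in L \Leftrightarrow C'[v] \in L$. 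Finally, applying $u \approx_L v$ to the particular context $C'$ yields $C'[u] \in L \Leftrightarrow C'[v] \in L$, and chaining the three equivalences gives $C[u \cdot \pi] \in L \Leftrightarrow C[v \cdot \pi] \in L$, as required.

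The only real work is the auxiliary commutation fact; once it is in place the rest is a three-line chain of equivalences. I expect the induction to be entirely routine, and the one point to watch is verifying that $C \cdot \pi$ still meets the defining condition of a context, which is immediate because $\pi$ permutes labels without altering $\Pos(C)$ or the arity function, and $x \cdot \pi = x$.
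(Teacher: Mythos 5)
Your proof is correct and follows essentially the same route as the paper's: both reduce $u \cdot \pi \approx_L v \cdot \pi$ to the hypothesis $u \approx_L v$ by pulling the context back along $\pi^{-1}$ and using equivariance of $L$, via the identity $(C \cdot \pi^{-1})[t] \cdot \pi = C[t \cdot \pi]$. The only difference is that you state and prove this commutation fact explicitly by induction, whereas the paper treats it as immediate from the definition of the group action on trees.
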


\begin{proof}
	We show that $t \cdot \pi \approx_L t' \cdot \pi$
	for all $\pi \in G$ and $t,t' \in \Tree_m(A)$ such that $t \approx_L t'$.
	By the definition of $\approx_L$, $t \cdot \pi \approx_L t' \cdot \pi$ is equivalent to
	$C[t \cdot \pi] \in L$ iff $C[t' \cdot \pi] \in L$ for all $C \in \Context_m(A)$.
	By the equivariance of $L$, this is equivalent to
	$C[t \cdot \pi] \cdot \pi^{-1} \in L$ 
	iff $C[t' \cdot \pi] \cdot \pi^{-1} \in L$.
	By the definition of the group action on $\Tree_m(A)$, this is equivalent to
	$(C \cdot \pi^{-1})[t] \in L$ iff $(C \cdot \pi^{-1})[t'] \in L$.
	We can prove this by $t \approx_L t'$.
	\qed
\end{proof}

\begin{lemma}[{\cite[Lemma 3.5]{BKL14}}]\label{lem:quotient}
	Let $X$ be a $G$-set and $R \subseteq X \times X$ be an equivalence relation
	that is equivariant.
	Then the quotient $X/R$ is a $G$-set, under the action
	$[x]_R \cdot \pi = [x \cdot \pi]_R$ of $G$, and the abstraction mapping
	$x \mapsto [x]_R \ : \ X \to X/R$ is an equivariant function.
\end{lemma}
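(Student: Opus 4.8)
The plan is to verify, in turn, three things: that the proposed action on $X/R$ is well defined, that it satisfies the two group-action axioms, and that the abstraction mapping is equivariant. Everything is routine once the single substantive point is isolated, so the write-up amounts to unfolding definitions in the right order.

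First I would check well-definedness of the operation $[x]_R \cdot \pi = [x\cdot\pi]_R$. Because an element of $X/R$ has many representatives, the value on the right must not depend on the chosen one: if $[x]_R = [y]_R$, i.e.\ $(x,y)\in R$, then we need $[x\cdot\pi]_R = [y\cdot\pi]_R$, i.e.\ $(x\cdot\pi,\,y\cdot\pi)\in R$ for every $\pi \in G$. This is precisely the assumption that $R$ is equivariant. This is the step where that hypothesis is indispensable, and it is the only place in the argument where anything could go wrong; I would present it as the crux, though it is hardly an ``obstacle'' in the usual sense.

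Next I would verify that $X/R$ is a $G$-set, inheriting the axioms from $X$. For the neutral element, $[x]_R \cdot e = [x\cdot e]_R = [x]_R$ using $x\cdot e = x$ in $X$. For composition, $[x]_R \cdot (\pi\sigma) = [x\cdot(\pi\sigma)]_R = [(x\cdot\pi)\cdot\sigma]_R = [x\cdot\pi]_R \cdot \sigma = ([x]_R \cdot \pi)\cdot\sigma$, using $x\cdot(\pi\sigma) = (x\cdot\pi)\cdot\sigma$ in $X$ together with the definition of the new action applied twice. Finally, equivariance of the abstraction mapping $q : x \mapsto [x]_R$ is immediate from the definition of the action: $q(x\cdot\pi) = [x\cdot\pi]_R = [x]_R \cdot \pi = q(x)\cdot\pi$ for all $x \in X$ and $\pi \in G$, which is the required identity for an equivariant function. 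This completes the argument.
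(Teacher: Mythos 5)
Your proof is correct: well-definedness via equivariance of $R$, the two action axioms inherited from $X$, and equivariance of the abstraction map are exactly the points that need checking, and you check them properly. Note that the paper itself gives no proof of this lemma---it is quoted directly from \cite[Lemma 3.5]{BKL14}---so your routine verification is precisely the standard argument the citation stands in for.
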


For $L \subseteq \Tree_m(A)$, 
we define the \emph{syntactic tree automaton} $\A_L = (Q_L,F_L,\delta_L)$ as
\begin{itemize}
	\item $Q_L = \Tree_m(A)/{\approx_L}$,
	\item $F_L = \{ [t] \mid t \in L \}$ and
	\item $\delta_L = (\delta_{L,0},\ldots,\delta_{L,m})$ where 
		  $\delta_{L,0}:A \to Q_L$ and $\delta_{L,k}:A \times Q_L^k \to Q_L$ 
		  for $1 \leq k \leq m$ are defined as
		\begin{align*}
			\delta_{L,0}(a) &= [a], \\
			\delta_{L,k}(a,[u_1],\ldots,[u_k]) &= [a(u_1,\dots,u_k)].
		\end{align*}
\end{itemize}
Because $\approx_L$ is a congruence relation, $\delta_L$ is well-defined.

\begin{lemma}\label{lem:syntactic}
	If $L \subseteq \Tree_m(A)$ is equivariant, then
	the syntactic tree automaton $\A_L = (Q_L,F_L,\delta_L)$ is a reachable $G$-DBTA.
\end{lemma}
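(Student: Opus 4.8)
The plan is to verify the three requirements in the definition of a reachable $G$-DBTA in turn: that $\A_L$ is a $G$-DBTA (i.e.\ $Q_L$ is a $G$-set, $F_L$ is equivariant, and the $\delta_{L,k}$ are equivariant functions), that it is deterministic and bottom-up (immediate from the shape of the definition, since each $\delta_{L,k}$ is a genuine function), and that it is reachable. Throughout I would use that $\approx_L$ is a congruence (already observed just before the statement) and, crucially, Lemma~\ref{lem:approx}, which tells us $\approx_L$ is equivariant because $L$ is equivariant.

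First I would apply Lemma~\ref{lem:quotient} with $X = \Tree_m(A)$ and $R = {\approx_L}$: since $\approx_L$ is an equivariant equivalence relation, the quotient $Q_L = \Tree_m(A)/{\approx_L}$ is a $G$-set under $[t] \cdot \pi = [t \cdot \pi]$, and the abstraction map $t \mapsto [t]$ is equivariant. Next, for $F_L$: I would show it is a union of orbits, i.e.\ if $[t] \in F_L$ then $[t]\cdot\pi \in F_L$ for all $\pi$. Indeed $[t] \in F_L$ means $t \approx_L t'$ for some $t' \in L$; then $[t]\cdot\pi = [t\cdot\pi]$ and $t\cdot\pi \approx_L t'\cdot\pi$ by equivariance of $\approx_L$, and $t'\cdot\pi \in L$ by equivariance of $L$, so $[t\cdot\pi] \in F_L$. (One should also note $F_L$ is well-defined as a set of classes, but that is immediate.)

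For the transition functions I would check equivariance of each $\delta_{L,k}$. For $\delta_{L,0}$: $\delta_{L,0}(a\cdot\pi) = [a\cdot\pi] = [a]\cdot\pi = \delta_{L,0}(a)\cdot\pi$, using the action on $Q_L$ and the fact that the one-node tree $a\cdot\pi$ is the group action applied to the one-node tree $a$. For $\delta_{L,k}$ with $k \ge 1$: $\delta_{L,k}(a\cdot\pi,[u_1]\cdot\pi,\ldots,[u_k]\cdot\pi) = \delta_{L,k}(a\cdot\pi,[u_1\cdot\pi],\ldots,[u_k\cdot\pi]) = [(a\cdot\pi)(u_1\cdot\pi,\ldots,u_k\cdot\pi)] = [a(u_1,\ldots,u_k)\cdot\pi] = [a(u_1,\ldots,u_k)]\cdot\pi = \delta_{L,k}(a,[u_1],\ldots,[u_k])\cdot\pi$, where the middle equality is just the definition of the group action on $\Tree_m(A)$. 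Here I must also recall why $\delta_{L,k}$ is well-defined in the first place — it is because $\approx_L$ is a congruence, so the value $[a(u_1,\ldots,u_k)]$ does not depend on the choice of representatives $u_i$; this was already remarked in the text.

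Finally, reachability: by an easy structural induction on a tree $t \in \Tree_m(A)$ one shows $\delta_L(t) = [t]$, where $\delta_L$ is the extension of $\delta_L$ to trees. The base case $t = a$ gives $\delta_L(a) = \delta_{L,0}(a) = [a]$; the inductive step for $t = a(t_1,\ldots,t_k)$ gives $\delta_L(t) = \delta_{L,k}(a,\delta_L(t_1),\ldots,\delta_L(t_k)) = \delta_{L,k}(a,[t_1],\ldots,[t_k]) = [a(t_1,\ldots,t_k)] = [t]$ by the induction hypothesis. Hence every state $[t] \in Q_L$ is hit by the tree $t$, so $\A_L$ is reachable. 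I do not expect a serious obstacle here; the only mildly delicate point is keeping straight the interplay between the group action on trees and the group action on the quotient $Q_L$ — in particular invoking Lemma~\ref{lem:approx} at exactly the right moments — and making sure the well-definedness of $\delta_L$ (from the congruence property) is not conflated with its equivariance (from the equivariance of $\approx_L$).
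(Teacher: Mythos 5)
Your proof is correct and follows essentially the same route as the paper's: it invokes Lemma~\ref{lem:approx} and Lemma~\ref{lem:quotient} for the $G$-set structure on $Q_L$, verifies equivariance of $F_L$ and of the $\delta_{L,k}$ by the same direct computations, and handles reachability via $\delta_L(t)=[t]$, which the paper only states as ``apparently reachable.'' Your added detail (the explicit induction for reachability and the separation of well-definedness from equivariance) is a faithful elaboration, not a different argument.
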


\begin{proof}
	Because $L$ is equivariant, $\approx_L$ is also equivariant by Lemma \ref{lem:approx}.
	Thus, by Lemma \ref{lem:quotient}, $Q_L = \Tree_m(A)/{\approx_L}$ is a $G$-set.
	By the equivariance of $L$,
	\[
		[t] \in F_L \Leftrightarrow t \in L \Leftrightarrow t \cdot \pi \in L
			\Leftrightarrow [t \cdot \pi] \in F_L \Leftrightarrow [t] \cdot \pi \in F_L.
	\]
	Thus, $F_L$ is equivariant.
	By
		$\delta_L(a,[u_1],\ldots,[u_k]) \cdot \pi 
			= [a(u_1,\ldots,u_k)] \cdot \pi 
			= [(a \cdot \pi)(u_1 \cdot \pi,\ldots,u_k \cdot \pi)] 
			= \delta_L(a \cdot \pi,[u_1 \cdot \pi],\ldots,[u_k \cdot \pi]) 
			= \delta_L(a \cdot \pi,[u_1] \cdot \pi,\ldots,[u_k] \cdot \pi)$,
	$\delta_L$ is equivariant.
	Thus, $\A_L$ is a $G$-DBTA.
	$\A_L$ is apparently reachable.
	\qed
\end{proof}

Let $\A = (Q,F,\delta)$ and $\A' = (Q',F',\delta')$ be $G$-DBTAs.
An equivariant function $\varphi:P \to Q'$ for some $P \subseteq Q$ 
satisfying the following two conditions 
is called a \emph{partial homomorphism} from $\A$ to $\A'$:
\begin{itemize}
	\item $q \in F$ iff $\varphi(q) \in F'$ for all $q \in P$, and
	\item $\varphi(\delta(a,q_1,\ldots,q_k)) = \delta'(a,\varphi(q_1),\ldots,\varphi(q_k))$ 
		for all $q_1,\ldots,q_k \in P \ (0 \leq k \leq m)$ and $a \in A$.
\end{itemize}
When there exists a surjective partial homomorphism from a subset of $Q$ to $Q'$, 
we write $\A' \sqsubseteq \A$. 
If $P = Q$, then $\varphi$ is called a \emph{homomorphism} from $\A$ to $\A'$.
It is easy to see that $L(\A) = L(\A')$ 
if there is a homomorphism $\varphi$ from $\A$ to $\A'$.
When $\varphi$ is surjective, $\A'$ is called an image of $\A$.
If $\A'$ is an image of $\A$ and the state set of $\A$ is orbit finite,
the state set of $\A'$ is also orbit finite.

\begin{lemma}\label{lem:image}
	Let $L$ be a recognizable tree language.
	The syntactic automaton $A_L$ is an image of any reachable $G$-DBTA recognizing $L$.
\end{lemma}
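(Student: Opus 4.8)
The plan is to construct, for any reachable $G$-DBTA $\A = (Q,F,\delta)$ with $L(\A) = L$, an explicit surjective homomorphism $\varphi$ from $\A$ to the syntactic automaton $\A_L = (Q_L,F_L,\delta_L)$. The only reasonable choice is $\varphi(\delta(t)) = [t]$ for every $t \in \Tree_m(A)$; since $\A$ is reachable, every state of $Q$ is of the form $\delta(t)$, so this prescription determines $\varphi$ on all of $Q$, provided it is well defined.

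The first step, and the technical heart of the argument, is the auxiliary fact that the behaviour of $\A$ under a context depends only on the state reached at the hole: for every $C \in \Context_m(A)$ there is a function $\hat{\delta}_C : Q \to Q$ with $\delta(C[t]) = \hat{\delta}_C(\delta(t))$ for all $t \in \Tree_m(A)$. This follows by a routine induction on the structure of $C$ using the recursive definition of $\delta$ on $\Tree_m(A)$ (in the inductive step one plugs the already-known value $\delta(t)$ into one argument of some $\delta_k$ and recurses into the remaining subtrees, which do not contain $x$). Granting this, well-definedness of $\varphi$ reduces to showing $\delta(t) = \delta(t') \Rightarrow t \approx_L t'$: for every context $C$ we get $C[t] \in L \iff \hat{\delta}_C(\delta(t)) \in F \iff \hat{\delta}_C(\delta(t')) \in F \iff C[t'] \in L$, which is exactly $t \approx_L t'$.

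It then remains to check the three defining conditions of a (surjective) homomorphism. For equivariance: if $q = \delta(t)$ then $q \cdot \pi = \delta(t)\cdot\pi = \delta(t\cdot\pi)$ by equivariance of $\delta$, so $\varphi(q\cdot\pi) = [t\cdot\pi] = [t]\cdot\pi = \varphi(q)\cdot\pi$, using that $\approx_L$ is equivariant (Lemma \ref{lem:approx}) so that the $G$-action on $Q_L$ is the one given by Lemma \ref{lem:quotient}. For accept states: $q = \delta(t) \in F \iff t \in L \iff [t] \in F_L \iff \varphi(q) \in F_L$, since $L(\A) = L$. For transitions: if $q_i = \delta(t_i)$ then $\delta_k(a,q_1,\dots,q_k) = \delta(a(t_1,\dots,t_k))$, hence $\varphi(\delta_k(a,q_1,\dots,q_k)) = [a(t_1,\dots,t_k)] = \delta_{L,k}(a,[t_1],\dots,[t_k]) = \delta_{L,k}(a,\varphi(q_1),\dots,\varphi(q_k))$. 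Surjectivity is immediate, as every $[t] \in Q_L$ equals $\varphi(\delta(t))$. Therefore $\varphi$ is a surjective homomorphism, and since $\A_L$ is a reachable $G$-DBTA by Lemma \ref{lem:syntactic}, this shows $\A_L$ is an image of $\A$. The main obstacle is purely the bookkeeping around $\hat{\delta}_C$ and verifying that all the objects involved are genuine (nominal) $G$-sets and equivariant maps; nothing here is deep, but the context-induced map is the one place a structural induction is genuinely required.
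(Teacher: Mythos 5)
Your proposal is correct and follows essentially the same route as the paper: the same map $\varphi(\delta(t)) = [t]$, the same well-definedness, equivariance, acceptance and transition checks, and the same use of reachability for surjectivity. The only difference is that you spell out, via the context-induced map $\hat{\delta}_C$, the implication $\delta(t)=\delta(t') \Rightarrow t \approx_L t'$, which the paper asserts without detail.
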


\begin{proof}
	Let $\A = (Q,F,\delta)$ be a reachable $G$-DBTA recognizing $L$.
	We define $\varphi:Q \to \Tree_m(A)/{\approx_L}$ as $\varphi(\delta(t)) = [t]$.
	The definition of $\varphi$ is well-defined because
	$\A$ is reachable and $\delta(u) = \delta(v)$ implies $[u] = [v]$.
	It is easy to check that $f$ is surjective.
	By the equivariance of $\delta$,
	\[
		\varphi(\delta(t) \cdot \pi) = \varphi(\delta(t \cdot \pi)) 
			= [t \cdot \pi] = [t] \cdot \pi = \varphi(\delta(t)) \cdot \pi.
	\]
	Thus, $\varphi$ is equivariant.
	We have
	\begin{align*}
		&\varphi(\delta(a(u_1,\ldots,u_k))) 
				= [a(u_1,\ldots,u_k)] \\
				&= \delta_L(a,[u_1],\ldots,[u_k]) 
				= \delta(a,\varphi(\delta(u_1)),\ldots,\varphi(\delta(u_k))).
	\end{align*}
	We also have $\delta(t) \in F \Leftrightarrow t \in L \Leftrightarrow [t] \in F_L$.
	Therefore, $\varphi$ is a homomorphism, and hence $A_L$ is an image of $\A$.
	\qed
\end{proof}

Let $\A = (Q,F,\delta)$ be a reachable $G$-DBTA over $\Tree_m(A)$.
The equivariant function $t \mapsto \delta(t)$ from $\Tree_m(A)$ to $Q$
is surjective because $\A$ is reachable.
If $C \subseteq \D$ supports $t$, then $C$ also supports $\delta(t)$ because
$t \cdot \pi = t$ implies $\delta(t) \cdot \pi = \delta(t \cdot \pi) = \delta(t)$
for all $\pi \in G$.
Thus, $Q$ is nominal because $\Tree_m(A)$ is nominal.
\begin{theorem}
	Let $L \subseteq \Tree_m(A)$ be an equivariant set.
	The following two conditions are equivalent:
	\begin{enumerate}
		\item[(1)] $\Tree_m(A)/{\approx_L}$ is orbit finite.
		\item[(2)] $L$ is recognized by a $G$-DBNTA.
	\end{enumerate}
\end{theorem}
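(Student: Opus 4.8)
The plan is to prove the two implications separately, reusing the syntactic automaton $\A_L = (Q_L,F_L,\delta_L)$ together with Lemmas~\ref{lem:syntactic} and~\ref{lem:image}; most of the work is already done, so what remains is orbit-finiteness and nominality bookkeeping. For $(2) \Rightarrow (1)$, suppose $L = L(\A)$ for a $G$-DBNTA $\A = (Q,F,\delta)$. I would first pass to the reachable part: put $Q' = \{\delta(t) \mid t \in \Tree_m(A)\}$, $F' = F \cap Q'$, and let $\delta'$ be the restriction of $\delta$ to $Q'$. Since $\delta(t)\cdot\pi = \delta(t\cdot\pi)$, the set $Q'$ is equivariant, hence a union of some of the (finitely many) orbits of $Q$, so $Q'$ is orbit finite, and it is nominal because $Q$ is. The transitions are well defined on $Q'$ because $\delta_k(a,\delta(t_1),\dots,\delta(t_k)) = \delta(a(t_1,\dots,t_k)) \in Q'$, hence $\A' = (Q',F',\delta')$ is a reachable $G$-DBNTA with $L(\A') = L$. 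By Lemma~\ref{lem:image}, $\A_L$ is an image of $\A'$, i.e.\ there is an equivariant surjection $Q' \to Q_L$; since $Q'$ is orbit finite, so is $Q_L = \Tree_m(A)/{\approx_L}$, which is $(1)$.

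For $(1) \Rightarrow (2)$, suppose $Q_L$ is orbit finite. By Lemma~\ref{lem:syntactic}, $\A_L = (Q_L,F_L,\delta_L)$ is a reachable $G$-DBTA. Its alphabet $A$ is orbit finite and nominal by hypothesis, $Q_L$ is orbit finite by hypothesis, and $Q_L$ is nominal because the quotient map $t \mapsto [t]$ is an equivariant surjection from the nominal set $\Tree_m(A)$: if $C$ supports $t$ then $t\cdot\pi = t$ implies $[t]\cdot\pi = [t\cdot\pi] = [t]$, so $C$ supports $[t]$. Hence $\A_L$ is a $G$-DBNTA. It remains to check $L(\A_L) = L$: an easy induction on the term structure gives $\delta_L(t) = [t]$ for every tree $t$, and instantiating the definition of $\approx_L$ with the trivial context (the single-node tree labelled $x$) shows $[t] = [t']$ implies $t \in L \Leftrightarrow t' \in L$; therefore $\delta_L(t) \in F_L \Leftrightarrow [t] \in F_L \Leftrightarrow t \in L$. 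Thus $L$ is recognized by a $G$-DBNTA, establishing $(2)$.

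I expect no genuine obstacle. The only points that need care are that restricting to the reachable states keeps the state set equivariant (hence orbit finite) and nominal, so the reachable sub-automaton is really a $G$-DBNTA, and that orbit-finiteness of $Q_L$ is inherited as an image along an equivariant surjection --- which is exactly the remark recorded just before Lemma~\ref{lem:image} combined with the fact that an equivariant subset of an orbit finite set has only finitely many orbits.
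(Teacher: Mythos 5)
Your proof is correct and takes essentially the same route as the paper: $(1)\Rightarrow(2)$ via the syntactic automaton and Lemma~\ref{lem:syntactic}, and $(2)\Rightarrow(1)$ by reducing to a reachable automaton and applying Lemma~\ref{lem:image}. You simply make explicit the details the paper leaves implicit (the construction of the reachable sub-automaton justifying the ``without loss of generality,'' the nominality and orbit finiteness of $\Tree_m(A)/{\approx_L}$, and the verification that $L(\A_L)=L$), all of which check out.
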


\begin{proof}
	$(1) \Rightarrow (2)$ can be easily proved by Lemma \ref{lem:syntactic}.
	Without loss of generality, assume that a given $G$-DBNTA is reachable.
	$(2) \Rightarrow (1)$ can be proved by Lemma~\ref{lem:image}.
\qed
\end{proof}

\section{Observation table}
	In this and the next sections, we extend the $L^{\ast}$-style algorithm in \cite{MSSKS17} to DBNTA.
For the extension from words to trees, we extend some notions given in \cite{Sa90}, where
another $L^{\ast}$-style algorithm is proposed to learn the set of derivation trees of an unknown context-free grammar without data values.

From now on, we assume a data symmetry $(\D,G)$ that admits least support
and a nominal alphabet $A$.
Let $B \subseteq \Tree_m(A)$ and $C \subseteq \Context_m(A)$.
We say that $B$ is \emph{subtree-closed} if and only if 
for every $b \in B$, $\Subtree(b) \subseteq B$ holds.
We also say that $C$ is \emph{$x$-prefix-closed} on $B$ if and only if
every $c \in C \setminus \{x\}$ has some $c' \in C$ satisfying
$c = c'[a(b_1,\ldots,b_{i-1},x,b_i,\ldots,b_{k-1})]$ where
$b_1,\ldots,b_{k-1} \in B$ and $a \in A$.

\begin{definition}
	Let $U$ be an unknown recognizable nominal tree language.
	An observation table is a triple $(\S,\E,\T)$, where
	\begin{itemize}
		\item $\S \subseteq \Tree_m(A)$ 
		      is an equivariant orbit finite set that is subtree-closed
			  and satisfies $A \subseteq \S$.
		\item $\Next(\S) = \{a(t_1,\ldots,t_k) \notin \S \mid  
				a \in A,t_1,\ldots,t_k \in \S, 1 \leq k \leq m\}$,
		\item $\E \subseteq \Context_m(A)$ is an equivariant orbit finite set
			  that is $x$-prefix-closed on $\S$, and
		\item $\T:\E[\S \cup \Next(\S)] \to \{0,1\}$ is an equivariant function, where
			  $\T(e[s]) = 1$ iff $e[s] \in U$
			  for all $e \in \E$ and $s \in \S \cup \Next(\S)$.
	\qed
	\end{itemize}
\end{definition}
We define the function $\row_{(\S,\E,\T)}:\S \cup \Next(\S) \to 2^{\E}$ as
$\row_{(\S,\E,\T)}(s) = \{ e \in \E \mid \T(e[s]) = 1 \}$.
We abbreviate $\row_{(\S,\E,\T)}$ as $\row$ if $(\S,\E,\T)$ is clear from the context.
It is easy to see that $\row$ is equivariant.
For $X \subseteq \S \cup \Next(\S)$, we define $\row(X) = \{ \row(s) \mid s \in X \}$.

An observation table can be expressed by the table 
with rows labeled with the elements of $\S \cup \Next(\S)$ and
columns labeled with the elements of $\E$ as shown in Fig. \ref{fig:ob-table}.
\begin{figure}
	\[
		\begin{array}{|cc|ccc|}
			\hline
			 & & \rule{3em}{0em}  & \E & \rule{3em}{0em} \\
			  & &  & e & \\
			\hline
			\rule{0em}{2em} &&& \vdots & \\
			\S & s & \cdots & \T(e[s]) & \\
			\rule{0em}{2em} &&&& \\
			\hline
			\rule{0em}{2em} &&&& \\
			\Next(\S) & & & &\\
			\rule{0em}{2em} &&&& \\
			\hline
		\end{array}
	\]
	\caption{Observation table $(\S,\E,\T)$}
	\label{fig:ob-table}
\end{figure}

An observation table $(\S,\E,\T)$ is \emph{closed} if and only if
for all $t \in \Next(\S)$, there exists some $s \in \S$ satisfying
$\row(t) = \row(s)$.
An observation table $(\S,\E,\T)$ is \emph{consistent} if and only if
for every $s_1,s_2 \in \S$, $\row(s_1) = \row(s_2)$ implies
\[
	\row(a(u_1,\ldots,u_{i-1},s_1,u_i,\ldots,u_{k-1})) = 
	\row(a(u_1,\ldots,u_{i-1},s_2,u_i,\ldots,u_{k-1}))
\]
for all $a \in A,u_1,\ldots,u_{k-1} \in \S$ and $1 \leq i \leq k$.

Let $(\S,\E,\T)$ be a closed and consistent observation table.
We define the $G$-DBNTA $\A(\S,\E,\T) = (\Q,\F,\del)$ derived from $(\S,\E,\T)$ as follows:
\begin{itemize}
	\item $\Q = \row(\S) = \{ \row(s) \mid s \in \S \}$,
	\item $\F = \{ \row(s) \mid s \in \S,\T(s) = 1 \}$,
	\item $\del_k(a,\row(s_1),\ldots,\row(s_k)) = \row(a(s_1,\ldots,s_k))$
			for $s_1,\ldots,s_k \in \S$.
\end{itemize}
It is easy to see that $\A(\S,\E,\T)$ is well-defined:
Let $s_1,s_2 \in \S$ be trees satisfying $\row(s_1) = \row(s_2)$.
Because $\E$ is $x$-prefix-closed, $x \in \E$ holds.
Thus, $\T(s_1) = \T(x[s_1])$ and $\T(s_2) = \T(x[s_2])$ are defined,
and $\T(s_1) = \T(s_2)$, and so $\F$ is well-defined.
Because $(\S,\E,\T)$ is consistent,
\[
	\row(a(u_1,\ldots,u_{i-1},s_1,u_{i},\ldots,u_{k-1})) 
	= \row(a(u_1,\ldots,u_{i-1},s_2,u_{i},\ldots,u_{k-1}))
\]
for all $a \in A,u_1,\ldots,u_{k-1}\in \S$ and $1 \le i \le k$.
Moreover, because $(\S,\E,\T)$ is closed,
there is $s \in \S$ satisfying
$\row(s) = \row(a(u_1,\ldots,u_{i-1},s_1,u_{i},\ldots,u_{k-1}))$.
Therefore, $\del$ is well-defined.
Because $\S$ is an orbit finite nominal set and $\row$ is an equivarinat function, 
$\Q(=\row(\S))$ is also an orbit finite nominal set.

\begin{lemma}\label{lem:consistency}
	Let $(\S,\E,\T)$ be a closed and consistent observation table.
	Then, $\A(\S,\E,\T) = (\Q,\F,\del)$ is consistent with $\T$.
\end{lemma}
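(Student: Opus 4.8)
The plan is to show that for every $t \in \E[\S \cup \Next(\S)]$ we have $t \in L(\A(\S,\E,\T)) \iff \T(t) = 1$, where I recall that "consistent with $\T$" was defined for a $G$-DBTA and a function on a subset of trees. The natural route is to first prove a structural claim about how $\del$ evaluates trees built from $\S$, namely:

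\begin{center}
\emph{Claim.} For every $s \in \S \cup \Next(\S)$, $\del(s) = \row(s)$.
\end{center}

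First I would establish this Claim by induction on the structure of $s$. For the base case, $s = a \in A \subseteq \S$, and $\del_0(a) = \del_0(a)$; here I use that $A \subseteq \S$ and that, since $\S$ is subtree-closed, every immediate subtree argument stays in $\S$, so $\del_0(a) = \row(a)$ by the definition of $\del$ (reading the nullary case of the transition as $\del_0(a) = \row(a)$, which is forced by taking $k=0$ in the definition of $\del_k$ with $a \in \S$). For the inductive step, let $s = a(s_1,\ldots,s_k)$. If $s \in \S$, then by subtree-closedness each $s_i \in \S$, so by the induction hypothesis $\del(s_i) = \row(s_i)$, and then $\del(s) = \del_k(a,\del(s_1),\ldots,\del(s_k)) = \del_k(a,\row(s_1),\ldots,\row(s_k)) = \row(a(s_1,\ldots,s_k)) = \row(s)$ by the definition of $\del_k$. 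If instead $s \in \Next(\S)$, then by the definition of $\Next(\S)$ we have $s = a(s_1,\ldots,s_k)$ with all $s_i \in \S$, and exactly the same computation gives $\del(s) = \row(s)$; note this is where well-definedness of $\del$ (closedness plus consistency) is silently used, but it was already checked above.

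Next I would use the Claim to finish. Take any $t = e[s]$ with $e \in \E$ and $s \in \S \cup \Next(\S)$. Since $\E$ is $x$-prefix-closed on $\S$, I can write $e$ as a stack of single-symbol contexts whose side-subtrees all lie in $\S$; evaluating $\del$ on $e[s]$ bottom-up, the side arguments evaluate (by the Claim) to $\row$-values of elements of $\S$, and the hole subtree $s$ evaluates to $\del(s) = \row(s)$ by the Claim. So $\del(e[s])$ depends on $s$ only through $\row(s)$; more precisely, I would show by induction on the structure of the context $e$ (using the $x$-prefix-closedness to peel off one layer at a time, and the well-definedness of $\del_k$ to replace a side subtree or the hole-subtree by any $\S$-element with the same $\row$) that $\del(e[s]) = \del(e[s'])$ whenever $\row(s) = \row(s')$. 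In particular, picking $s' \in \S$ with $\row(s') = \row(s)$ — which exists by closedness when $s \in \Next(\S)$, and trivially when $s \in \S$ — we get $e[s'] \in \E[\S]$ and $\del(e[s]) = \del(e[s'])$. Now $t = e[s] \in L(\A(\S,\E,\T)) \iff \del(e[s]) \in \F \iff \del(e[s']) \in \F$, and by the definition of $\F$ together with the Claim applied at the top, $\del(e[s']) \in \F \iff$ there is $s'' \in \S$ with $\T(s'') = 1$ and $\row(s'') = \del(e[s'])$. I would then reconcile this with the table entry: since $e \in \E$ and $s \in \S \cup \Next(\S)$, $\T(e[s])$ is defined and equals $1$ iff $e[s] \in U$; the key identity to nail down is $\row(e[s])$-style reasoning showing $\T(e[s]) = 1 \iff \del(e[s]) \in \F$, which follows because $e = e'[a(\ldots,x,\ldots)]$ decompositions let one push $e$ into the row-index while keeping the evaluation fixed, ultimately reducing to $\T(s'') = 1$ for an appropriate representative $s''\in\S$.

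The main obstacle I anticipate is the bookkeeping in this last reduction: making precise that $\del(e[s]) \in \F$ is equivalent to $\T(e[s]) = 1$ requires carefully using the $x$-prefix-closedness of $\E$ to decompose $e$ into layers, pushing the outer layers of $e$ together with the side-subtrees (all in $\S$) into a modified context that is still in $\E$, and tracking that at each step the $\T$-value is preserved because it is literally the membership indicator of $U$ and the two trees $e[s]$ and $e'[s'']$ are the \emph{same} tree. The structural Claim $\del(s) = \row(s)$ and the replacement property "$\row(s)=\row(s') \Rightarrow \del(e[s])=\del(e[s'])$" are routine inductions once set up; the delicate point is aligning the definition of $\F$ (which refers to $\T(s)$ for $s \in \S$) with the hypothesis on $\T$ over all of $\E[\S \cup \Next(\S)]$, and this is exactly where $x \in \E$ and the $x$-prefix-closedness are indispensable.
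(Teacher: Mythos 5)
Your plan is correct and follows essentially the same route as the paper's proof, which is only given by reference to Sakakibara's Lemma 4.2: first establish $\del(s)=\row(s)$ for all $s\in\S\cup\Next(\S)$ by structural induction (using subtree-closedness and $A\subseteq\S$), then prove $\del(e[s])\in\F \Leftrightarrow \T(e[s])=1$ by induction on the context $e$, peeling one layer at a time via $x$-prefix-closedness, using closedness to replace $\Next(\S)$-trees by row-equivalent elements of $\S$ (which preserves both the computed state and, since $\row(s)=\row(s')$ means equal $\T$-values on all of $\E$, the table entry), and using $x\in\E$ in the base case to align $\F$ with $\T$. The only point to make explicit when writing it up is that $\T(e[s])=\T(e[s'])$ for $\row(s)=\row(s')$ and $e\in\E$ follows directly from the definition of $\row$; otherwise the plan matches the intended argument.
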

The proof is similar to the proof of Lemma 4.2 in \cite{Sa90}.

\begin{theorem}\label{thm:larger}
	For a (not necessarily closed and consistent) observation table $(\S,\E,\T)$ and
	a $G$-DBNTA $\A = (Q,F,\delta)$ consistent with $\T$,
	$\row(\S) \preceq Q$ holds.
\end{theorem}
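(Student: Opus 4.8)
The goal is to show $\row(\S)\preceq Q$, i.e., to produce an equivariant surjection from a subset of $Q$ onto $\row(\S)$. The natural candidate is built from the run of $\A$ on the trees in $\S$. Concretely, I would define $P=\{\delta(s)\mid s\in\S\}\subseteq Q$ (this is equivariant since $\S$ and $\delta$ are equivariant) and a map $\psi:P\to\row(\S)$ by $\psi(\delta(s))=\row(s)$ for $s\in\S$. Surjectivity onto $\row(\S)$ is immediate from the definition of $\row(\S)$, so the two things to check are: (a) $\psi$ is well-defined, and (b) $\psi$ is equivariant. Equivariance of $\psi$ follows routinely once well-definedness is in hand: for $\pi\in G$, $\psi(\delta(s)\cdot\pi)=\psi(\delta(s\cdot\pi))=\row(s\cdot\pi)=\row(s)\cdot\pi=\psi(\delta(s))\cdot\pi$, using equivariance of $\delta$ and of $\row$, plus the fact that $s\cdot\pi\in\S$ because $\S$ is equivariant.

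So the heart of the matter is well-definedness: I must show that $\delta(s_1)=\delta(s_2)$ implies $\row(s_1)=\row(s_2)$ for $s_1,s_2\in\S$. For this I would prove the key lemma that for every context $e\in\E$ and every $t\in\S\cup\Next(\S)$, whether $e[t]\in U$ depends only on $\delta(t)$ — more precisely, that $\delta(e[t])$ is determined by $\delta(t)$ (and $e$), so that $\delta(s_1)=\delta(s_2)$ forces $\delta(e[s_1])=\delta(e[s_2])$, hence $e[s_1]\in U\iff e[s_2]\in U$ since $\A$ is consistent with $\T$ and $\T(e[s_i])=1\iff e[s_i]\in U$. The statement "$\delta(e[t])$ depends only on $\delta(t)$" is proved by induction on the context $e$: writing $\delta$ as a bottom-up run, filling the hole $x$ with $t$ versus with $t'$ where $\delta(t)=\delta(t')$ yields the same state at the hole, and then the same state propagates up through $e$ by repeated application of the transition functions $\delta_k$. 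The only subtlety is bookkeeping: $e[s_i]$ need not lie in $\S\cup\Next(\S)$, but $\T$ is only defined on $\E[\S\cup\Next(\S)]$ and $e\in\E$, $s_i\in\S\subseteq\S\cup\Next(\S)$, so $e[s_i]\in\E[\S\cup\Next(\S)]$ and $\T(e[s_i])$ is indeed defined; that is exactly the domain on which consistency of $\A$ with $\T$ gives us $t\in L(\A)\iff\T(t)=1$.

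Thus the chain is: $\delta(s_1)=\delta(s_2)$ $\Rightarrow$ for all $e\in\E$, $\delta(e[s_1])=\delta(e[s_2])$ $\Rightarrow$ $e[s_1]\in L(\A)\iff e[s_2]\in L(\A)$ $\Rightarrow$ (by consistency with $\T$) $\T(e[s_1])=\T(e[s_2])$ for all $e\in\E$ $\Rightarrow$ $\row(s_1)=\row(s_2)$. This establishes that $\psi$ is well-defined, and combined with equivariance and surjectivity it gives an equivariant surjection from the equivariant subset $P\subseteq Q$ onto $\row(\S)$, which is precisely $\row(\S)\preceq Q$ by definition.

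**Main obstacle.** I expect the only real work to be the induction showing $\delta(e[t])$ is a function of $\delta(t)$ through a context $e$; everything else (equivariance, surjectivity, the domain checks for $\T$) is routine. Even that induction is standard for bottom-up tree automata — the nominal structure does not interfere, since we never need least supports here, only equivariance, which is already built into all components of $\A$ and of the observation table.
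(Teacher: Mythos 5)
Your proposal is correct and follows essentially the same route as the paper's proof: the map $\delta(s)\mapsto\row(s)$ on $\{\delta(s)\mid s\in\S\}\subseteq Q$, with well-definedness via the chain $\delta(s_1)=\delta(s_2)\Rightarrow\delta(e[s_1])=\delta(e[s_2])\Rightarrow e[s_1]\in L(\A)\Leftrightarrow e[s_2]\in L(\A)\Leftrightarrow\T(e[s_1])=\T(e[s_2])$, plus equivariance and surjectivity. You merely make explicit two points the paper leaves implicit (the induction on the context and the check that $e[s_i]$ lies in the domain of $\T$), which is a welcome but not substantive difference.
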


\begin{proof}
	We show that the function $\delta(s) \mapsto \row(s)$ is an equivariant surjection
	from $\{ \delta(s) \mid s \in \S \} \subseteq Q$ to $\row(\S)$.
	This function is well-defined since
	\begin{align*}
			\delta(s_1) = \delta(s_2) 
			& \Rightarrow \forall e \in \E.\delta(e[s_1]) = \delta(e[s_2]) \\
			& \Rightarrow \forall e \in \E.e[s_1]\in L(\A) \quad 
				\mathit{iff} \quad e[s_2] \in L(\A) \\
			& \Leftrightarrow \forall e \in \E.\T(e[s_1]) = \T(e[s_2]) \\
			& \Leftrightarrow \row(s_1) = \row(s_2).
	\end{align*}
	This function is also equivariant because
	$\delta(s) \cdot \pi = \delta(s \cdot \pi)
		\mapsto \row(s \cdot \pi) = \row(s) \cdot \pi$.
	Surjectivity is clear.
	Therefore, because there is an equivariant function from a subset of $Q$ to $\row(\S)$,
	$\row(\S) \preceq Q$ holds.
	\qed
\end{proof}

If an observation table is closed and consistent, Theorem \ref{thm:larger} can be lifted
from a relation on states ($\preceq$) to a relation on automata ($\sqsubseteq$)
as stated in the next lemma.

\begin{lemma}\label{lem:smallestness}
	Let $(\S,\E,\T)$ be a closed and consistent observation table.
	For every $G$-DBNTA $\A$ that is consistent with $\T$,
	$\A(\S,\E,\T) \sqsubseteq \A$ holds.
\end{lemma}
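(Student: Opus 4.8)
The plan is to construct an explicit surjective partial homomorphism from a subset of the state set $Q$ of $\A$ onto the state set $\Q = \row(\S)$ of $\A(\S,\E,\T)$, thereby witnessing $\A(\S,\E,\T) \sqsubseteq \A$. The natural candidate is the map already introduced in the proof of Theorem~\ref{thm:larger}: namely $\varphi : \{\delta(s) \mid s \in \S\} \to \row(\S)$ defined by $\varphi(\delta(s)) = \row(s)$. Theorem~\ref{thm:larger} already establishes that $\varphi$ is a well-defined equivariant surjection. So the remaining work is purely to check that $\varphi$ satisfies the two defining conditions of a partial homomorphism from $\A$ to $\A(\S,\E,\T)$: preservation of accept states and commutation with the transition functions.

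First I would verify the accept-state condition. For $s \in \S$, we have $\delta(s) \in F$ iff $s \in L(\A)$ iff $\T(x[s]) = 1$ (using $x \in \E$, which holds because $\E$ is $x$-prefix-closed, and the consistency of $\A$ with $\T$) iff $x \in \row(s)$. On the other side, by definition $\row(s) \in \F$ iff there is $s' \in \S$ with $\row(s') = \row(s)$ and $\T(s') = 1$; since $\T(s') = \T(x[s'])$ depends only on whether $x \in \row(s')$, this is equivalent to $x \in \row(s)$. Hence $\delta(s) \in F \Leftrightarrow \varphi(\delta(s)) \in \F$, as required.

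Next I would verify the transition-commutation condition. Take states $q_1,\ldots,q_k$ in the domain $\{\delta(s)\mid s\in\S\}$, say $q_i = \delta(s_i)$ with $s_i \in \S$. We must show $\varphi(\delta(a, q_1,\ldots,q_k)) = \del(a, \varphi(q_1),\ldots,\varphi(q_k))$. Now $\delta(a,\delta(s_1),\ldots,\delta(s_k)) = \delta(a(s_1,\ldots,s_k))$, and since $a(s_1,\ldots,s_k) \in \S \cup \Next(\S)$, by closedness there is $s \in \S$ with $\row(s) = \row(a(s_1,\ldots,s_k))$; I would then argue (as in Theorem~\ref{thm:larger}, using $\delta(a(s_1,\ldots,s_k)) = \delta(s)$ whenever $\row$ agrees—more precisely, that $\varphi$ is already known to be well-defined on all of $\{\delta(t)\mid t \in \S \cup \Next(\S)\}$ by the same argument) that $\varphi(\delta(a(s_1,\ldots,s_k))) = \row(a(s_1,\ldots,s_k))$. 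On the other side, $\del(a, \row(s_1),\ldots,\row(s_k)) = \row(a(s_1,\ldots,s_k))$ by definition of $\del$. So both sides equal $\row(a(s_1,\ldots,s_k))$.

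The main subtlety—and the step I expect to require the most care—is the bookkeeping around well-definedness of $\varphi$ when its argument $\delta(a(s_1,\ldots,s_k))$ arises from a tree in $\Next(\S)$ rather than $\S$ itself: one needs that $\varphi$, originally defined on $\{\delta(s)\mid s\in\S\}$, extends consistently to such states, which follows because the well-definedness argument of Theorem~\ref{thm:larger} (namely $\delta(u) = \delta(v) \Rightarrow \row(u) = \row(v)$) applies verbatim to all $u,v \in \S \cup \Next(\S)$, and closedness guarantees the value lands in $\row(\S)$. Once this is pinned down, equivariance and surjectivity come for free from Theorem~\ref{thm:larger}, and the two homomorphism conditions are the short computations above, completing the proof that $\A(\S,\E,\T) \sqsubseteq \A$. \qed
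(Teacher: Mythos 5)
Your proposal is correct and follows essentially the same route as the paper's proof: it reuses the map $\varphi(\delta(s)) = \row(s)$ from Theorem~\ref{thm:larger} and verifies the two partial-homomorphism conditions via $\delta(s)\in F \Leftrightarrow \T(s)=1 \Leftrightarrow \row(s)\in\F$ and $\varphi(\delta(a(s_1,\ldots,s_k))) = \row(a(s_1,\ldots,s_k)) = \del(a,\row(s_1),\ldots,\row(s_k))$. Your explicit treatment of extending $\varphi$ to states $\delta(t)$ with $t\in\Next(\S)$ (via closedness and the well-definedness argument applied to $\S\cup\Next(\S)$) is a point the paper leaves implicit, but it is the same underlying argument.
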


\begin{proof}
	Let $\A(\S,\E,\T) = (\Q,\F,\del)$ and $A = (Q,F,\delta)$.
	By the proof of Theorem \ref{thm:larger},
	the function $\varphi:\delta(s) \mapsto \row(s)$ 
	from $\{ \delta(s) \mid s \in \S \} \subseteq Q$ to $\Q(=\row(\S))$
	is equivariant and surjective.
	By
	$\delta(s) \in F \Leftrightarrow \T(s) = 1
		\Leftrightarrow \row(s) \in \F 
		\Leftrightarrow \varphi(\delta(s)) \in \F$
	and
	$\varphi(\delta(a,\delta(s_1),\ldots,\delta(s_k))) 
			= \varphi(\delta(a(s_1,\ldots,s_k)))
		  	= \row(a(s_1,\ldots,s_k)) 
		  	= \del(a,\row(s_1),\ldots,\row(s_k))
		  	= \del(a,\varphi(\delta(s_1)),\ldots,\varphi(\delta(s_k)))$,
	$\varphi$ is a partial homomorphism.
	\qed
\end{proof}

By Lemmas \ref{lem:consistency} and \ref{lem:smallestness},
we have the following theorem.

\begin{theorem}
	Let $(\S,\E,\T)$ be a closed and consistent observation table.
	$\A(\S,\E,\T)$ is consistent with $\T$, and
	for every $G$-DBNTA $\A$ that is consistent with $\T$,
	$\A(\S,\E,\T) \sqsubseteq \A$ holds.
\end{theorem}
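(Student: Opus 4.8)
The plan is straightforward: the statement is precisely the conjunction of the two immediately preceding lemmas, so I would simply cite them. The first clause, that $\A(\S,\E,\T)$ is consistent with $\T$, is exactly Lemma~\ref{lem:consistency}. The second clause, that $\A(\S,\E,\T) \sqsubseteq \A$ for every $G$-DBNTA $\A$ consistent with $\T$, is exactly Lemma~\ref{lem:smallestness}. Since the hypothesis ``$(\S,\E,\T)$ is closed and consistent'' is the shared hypothesis of both lemmas, nothing else is needed: the proof is one sentence invoking Lemmas~\ref{lem:consistency} and~\ref{lem:smallestness}.

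For completeness I would note the shape of the two ingredients, in case a self-contained argument is wanted. For consistency with $\T$, one argues by induction on the structure of $t \in \S \cup \Next(\S)$ that $\del(t) = \row(t)$, using that $\del$ is well defined (which rests on closedness and consistency of the table) together with $x \in \E$ to translate $\del(t) \in \F$ into $\T(t) = 1$; this is the argument sketched as ``similar to the proof of Lemma~4.2 in~\cite{Sa90}''. For $\A(\S,\E,\T) \sqsubseteq \A$, one takes the equivariant surjection $\varphi \colon \delta(s) \mapsto \row(s)$ from a subset of $Q$ onto $\Q = \row(\S)$ already constructed in the proof of Theorem~\ref{thm:larger}, and checks that it is a \emph{partial homomorphism}: it respects accept states because $\delta(s) \in F \iff \T(s) = 1 \iff \row(s) \in \F$, and it respects transitions because $\varphi(\delta(a,\delta(s_1),\ldots,\delta(s_k))) = \row(a(s_1,\ldots,s_k)) = \del(a,\row(s_1),\ldots,\row(s_k))$, the well-definedness of both sides being where closedness and consistency enter.

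I do not expect any real obstacle here; the work has all been done in the preceding lemmas. The only point that might look like a gap, namely that $\A(\S,\E,\T)$ is a genuine $G$-DBNTA (orbit finite nominal state set) rather than merely a $G$-DBTA, has already been dispatched in the paragraph preceding Lemma~\ref{lem:consistency}, where $\Q = \row(\S)$ is shown to be orbit finite nominal as the equivariant image of the orbit finite nominal set $\S$. Hence the theorem is obtained by combining Lemmas~\ref{lem:consistency} and~\ref{lem:smallestness}, and it serves as the summary statement characterizing $\A(\S,\E,\T)$ as a $\sqsubseteq$-least $G$-DBNTA consistent with the current table.
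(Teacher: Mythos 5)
Your proposal is correct and matches the paper exactly: the paper states the theorem as an immediate consequence of Lemmas~\ref{lem:consistency} and~\ref{lem:smallestness}, which is precisely your one-sentence argument. The additional remarks you give (the well-definedness of $\A(\S,\E,\T)$ and the partial homomorphism $\varphi:\delta(s)\mapsto\row(s)$) are just the content of those lemmas and the paragraph preceding them, so nothing is missing.
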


\section{Learning algorithm}
	We show the proposed learning algorithm (Algorithm~\ref{alg:learn}) in the following page.
We will give a part of a run of Algorithm~\ref{alg:learn} on an example in Section~\ref{sec:example}.
In the Algorithm~\ref{alg:learn}, we assume that the teacher answering queries is given as an oracle.
In an application to the compositional verification, for example, the teacher is implemented as a model checker (see \cite{CGP03}).

Because $\S$ and $\E$ of an observation table $(\S,\E,\T)$ can be infinite sets,
we have to show that $(\S,\E,\T)$ can be expressed by finite means
and each step of Algorithm \ref{alg:learn} runs in finite steps.
We first show that $(\S,\E,\T)$ has a finite description.
Because $\S$ and $\E$ are orbit finite nominal sets, by Proposition \ref{prop:CSec},
we can express $\S$ and $\E$ by support representations.
By Proposition \ref{prop:func}, we can express $\T$ by finite means because
$\T$ consists of a finite number of equivariant functions 
whose domains and ranges are both single orbit nominal sets.
In Algorithm~\ref{alg:learn}, each orbit $O$ is represented by any one element $s \in O$.
Let us call $s$ a representative of $O$.

Next, we show that each step of Algorithm \ref{alg:learn} runs in finite steps.
To check the closedness in line 10 of Algorithm~\ref{alg:learn},
it suffices to check 
whether for each orbit $O$ of $\Next(\S)$ and a representative $s'$ of $O$,
there is $s \in \S$ such that $\row(s) = \row(s')$.
Finding $s \in \S$ satisfying $\row(s) = \row(s')$ is equivalent to
finding $\pi \in G$ such that $\row(s') = \row(t \cdot \pi)\,(=\row(t) \cdot \pi)$ 
for some representative $t \in \S$.
Let $C,D \subseteq \D$ be the least supports of $\row(s')$ and $\row(t)$, respectively.
The least support of $\row(t) \cdot \pi$ is $D \cdot \pi$.
Thus, if $\row(s') = \row(t \cdot \pi)$, then $C = D \cdot \pi$ must hold.
Moreover, because $D$ is the (least) support of $\row(t)$,
if $\pi_1|_D = \pi_2|_D$ then $\row(t) \cdot \pi_1 = \row(t) \cdot \pi_2$. 
Thus, we only have to check a finite number of $\pi$ satisfying $C = D \cdot \pi$.
To check the consistency in line 5 of Algorithm~\ref{alg:learn},
it suffices to check the emptiness of
\begin{align*}
	&\{(s_1,s_2,a,e) \in \S \times \S \times A \times \E \mid 
		\text{$\row(s_1) = \row(s_2)$ and for $\exists u_1,\ldots,u_{k-1} \in \S$,} \\
	&\T(e[a(u_1,\ldots,u_{i-1},s_1,u_i,\ldots,u_{k-1})]) \neq
		T(e[a(u_1,\ldots,u_{i-1},s_2,u_i,\ldots,u_{k-1})])\}.
\end{align*}
$\S \times \S \times A \times \E$ is an orbit finite nominal set,
and the above set is a union of some orbits of $\S \times \S \times A \times \E$.
Thus, we can check the emptiness of the set, and if not, we can obtain representatives of the set.
\begin{algorithm}[t]
  \caption{Angluin-style algorithm for \textit{G-DBNTA}}
  \label{alg:learn}
\begin{algorithmic}[1]
\State $\S:=A, 
		\E:=\{x\}$;
\State Construct the initial observation table $(\S,\E,\T)$ using membership queries;
\Repeat 
\While {$(\S,\E,\T)$ is not closed or not consistent}
	\If{$(\S,\E,\T)$ is not consistent}
	\State Find ${s_1,s_2,u_1,\ldots,u_{k-1} \in \S},{e \in \E},{a \in A},{i \in \mathbb{N}}$ such that
			\begin{align*}
				&\text{$\row(s_1) = \row(s_2)$ and} \\
				&{T(e[a(u_1,\ldots,u_{i-1},s_1,u_i,\ldots,u_k)]) 
					\neq T(e[a(u_1,\ldots,u_{i-1},s_2,u_i,\ldots,u_k)])};
			\end{align*}
	\State Add $\Orbit(e[a(u_1,\ldots,u_{i-1},x,u_i,\ldots,u_k)])$ to $\E$;
	\State Extend $\T$ to $\E[(\S \cup \Next(\S))]$ using membership queries;
	\EndIf

	\If{$(\S,\E,\T)$ is not closed}
	\State Find $s' \in \Next(\S)$ such that $\row(s') \neq \row(s) \text{ for all } s \in \S$;
	\State Add $\Orbit(s')$ to $\S$;
	\State Extend $\T$ to $\E[\S \cup \Next(\S)]$ using membership queries;
	\EndIf
\EndWhile
\State Let $\A = \A(\S,\E,\T)$;
\State Construct the conjecture $\A$;
\If{the Teacher replies \textit{no} with a counter-example $t$}
	\State Add $\Orbit(\Subtree(t))$ to $\S$;
	\State Extend $\T$ to $\E[\S \cup \Next(\S)]$ using membership queries;
\EndIf
\Until{the Teacher replies \textit{yes} to the conjecture $\A$;}
\State \Return $\A$;
\end{algorithmic}
\end{algorithm}

\paragraph*{Correctness}
Because Algorithm \ref{alg:learn} uses an equivalence query,
if it terminates, then it outputs the correct $G$-DBNTA.

To prove the termination of Algorithm~\ref{alg:learn},
we show the following two lemmas that guarantee that 
$\row(\S)$ strictly increases with respect to $\prec$
each time an observation table is extended.

\begin{lemma}\label{lem:addS}
	If $(\S,\E,\T)$ and $(\S',\E,\T')$ are observation tables such that
	$\S \subsetneq \S'$ and 
	$\T'(e[s]) = \T(e[s])$ for all $e \in \E$ and $s \in \S$,
	then $\row_{(\S,\E,\T)}(\S) \prec \row_{(\S',\E,\T')}(\S')$.
\end{lemma}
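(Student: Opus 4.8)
The plan is to prove the two ingredients of $\prec$ in turn: first that $\row_{(\S,\E,\T)}(\S) \preceq \row_{(\S',\E,\T')}(\S')$ via an essentially trivial inclusion of row sets, and then that the two sets are not isomorphic because that inclusion is \emph{proper}.

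For the first ingredient I would note that the hypothesis forces the $\S$-rows of the two tables to coincide as sets: for every $s \in \S$ (hence $s \in \S \cup \Next(\S)$, and since $\S \subseteq \S'$ also $s \in \S' \cup \Next(\S')$, so all relevant values are defined) we have $\row_{(\S,\E,\T)}(s) = \{e \in \E \mid \T(e[s]) = 1\} = \{e \in \E \mid \T'(e[s]) = 1\} = \row_{(\S',\E,\T')}(s)$, using $\T'(e[s]) = \T(e[s])$. Hence $\row_{(\S,\E,\T)}(\S) = \row_{(\S',\E,\T')}(\S)$, and since $\S \subseteq \S'$ this is a subset of $\row_{(\S',\E,\T')}(\S')$. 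Because $\S$ is equivariant and $\row$ is equivariant, $\row_{(\S,\E,\T)}(\S)$ is an \emph{equivariant} subset of the orbit finite nominal set $\row_{(\S',\E,\T')}(\S')$, so the identity map on it is an equivariant surjection from a subset of $\row_{(\S',\E,\T')}(\S')$ onto $\row_{(\S,\E,\T)}(\S)$; this is exactly $\row_{(\S,\E,\T)}(\S) \preceq \row_{(\S',\E,\T')}(\S')$.

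For the second ingredient I would use the observation that an equivariant subset of a $G$-set is a union of some of its orbits, so a \emph{proper} equivariant subset of an orbit finite nominal set is a union of strictly fewer orbits; since any isomorphism of $G$-sets induces a bijection between the two orbit sets, a proper equivariant subset can never be isomorphic to the whole set. Applied to $\row_{(\S,\E,\T)}(\S) \subseteq \row_{(\S',\E,\T')}(\S')$, this reduces the remaining work to showing the inclusion is \emph{strict}, i.e., that some tree in $\S' \setminus \S$ has a row equal to no $\S$-row.

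This strictness is the main obstacle, and it is where one must exploit \emph{how} $\S'$ extends $\S$ rather than merely $\S \subsetneq \S'$ (adding the orbit of an arbitrary tree to $\S$ need not create a new row). In Algorithm~\ref{alg:learn}, $\S$ is enlarged in only two ways. If $\S' = \S \cup \Orbit(s')$ for an $s' \in \Next(\S)$ witnessing non-closedness, then by definition $\row(s') \neq \row(s)$ for all $s \in \S$, so $s'$ carries a fresh row, and by the first step its row in $(\S',\E,\T')$ is unchanged. If $\S' = \S \cup \Orbit(\Subtree(t))$ for a counterexample $t$ to the conjecture $\A(\S,\E,\T)$, then — since $\A(\S,\E,\T)$ is consistent with $\T$ by Lemma~\ref{lem:consistency} but misclassifies $t$ — I would run a counterexample analysis in the style of $L^{\ast}$ to locate a subtree of $t$ whose row falls outside $\row_{(\S,\E,\T)}(\S)$; identifying the right subtree is the delicate point of this case. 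In either case the strict inclusion follows, and with it $\row_{(\S,\E,\T)}(\S) \prec \row_{(\S',\E,\T')}(\S')$.
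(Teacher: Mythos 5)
Your first two steps are correct and they isolate the right reduction: since $\T'$ agrees with $\T$ on $\E[\S]$ and both tables share the same $\E$, every $s \in \S$ satisfies $\row_{(\S,\E,\T)}(s) = \row_{(\S',\E,\T')}(s)$, so $\row_{(\S,\E,\T)}(\S)$ is an equivariant subset of $\row_{(\S',\E,\T')}(\S')$ and the identity on it is the required equivariant surjection; moreover a proper equivariant subset of an orbit finite nominal set is a union of strictly fewer orbits and hence cannot be isomorphic to the whole set. The genuine gap is that the remaining step, strictness of the inclusion, is never established. You defer it to the two ways Algorithm~\ref{alg:learn} enlarges $\S$, settle the non-closedness case, and leave the counterexample case as ``the delicate point''. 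But that case is not merely delicate: with $\E$ fixed, adding $\Orbit(\Subtree(t))$ for a counterexample $t$ need not create any new row at all. Exactly as in classical $L^{\ast}$, inserting the counterexample's subtrees may surface only as an inconsistency between two $\S$-trees having equal rows, and the number of distinct rows grows only after $\E$ is subsequently enlarged (Lemma~\ref{lem:addE}); so the counterexample analysis you invoke cannot in general exhibit a subtree of $t$ whose row lies outside $\row_{(\S,\E,\T)}(\S)$, and this branch of your plan fails. In addition, by importing the algorithm's extension modes you are no longer proving the lemma as stated, whose only hypotheses are $\S \subsetneq \S'$ and the agreement of $\T'$ with $\T$ on $\E[\S]$.

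For comparison, the paper's own proof does not argue through the algorithm at all: it is a one-line claim that, because $\S \subsetneq \S'$ and both are equivariant, $\S'$ has strictly more orbits than $\S$, and it takes this as immediately yielding $\row_{(\S,\E,\T)}(\S) \prec \row_{(\S',\E,\T')}(\S')$. Your implicit worry that this transfer needs justification --- two distinct orbits of trees can share a row, so more orbits in $\S'$ does not obviously mean more orbits in $\row_{(\S',\E,\T')}(\S')$ --- is legitimate, but it is precisely the point your attempt also leaves unsettled. As submitted, the proposal is not a proof of the lemma: the non-isomorphism half of $\prec$ remains open in the only case that matters.
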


\begin{proof}
	The lemma obviously holds because if $\S \subsetneq \S'$,
	the number of orbits of $\S'$ is larger than that of $\S$.
	\qed
\end{proof}

\begin{lemma}\label{lem:addE}
	If $(\S,\E,\T)$ and $(\S,\E',\T')$ are observation tables such that
	$\E \subsetneq \E'$ and 
	$\T'(e[s]) = \T(e[s])$ for all $e \in \E$ and $s \in \S$, then
	$\row_{(\S,\E,\T)}(\S) \prec \row_{(\S,\E',\T')}(\S)$.
\end{lemma}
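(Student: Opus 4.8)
We are to show that adding a new context-orbit to $\E$ strictly increases $\row(\S)$ with respect to $\prec$. The first half — that $\row_{(\S,\E',\T')}(\S) \preceq \row_{(\S,\E,\T)}(\S)$ — is the easy direction: since $\E \subseteq \E'$ and the two tables agree on $\E$, the map $\row_{(\S,\E',\T')}(s) \mapsto \row_{(\S,\E,\T)}(s)$ obtained by restricting a subset-of-$\E'$ to the subset $\E \subseteq \E'$ is well-defined, equivariant and surjective from $\row_{(\S,\E',\T')}(\S)$ onto $\row_{(\S,\E,\T)}(\S)$. (Well-definedness: if two rows over $\E'$ are equal then their restrictions to $\E$ agree; equivariance: restriction commutes with the $G$-action on $2^{\E'}$ and $2^{\E}$ because $\E$ is equivariant; surjectivity: every row over $\E$ comes from some $s \in \S$.) Hence $\row_{(\S,\E',\T')}(\S) \preceq \row_{(\S,\E,\T)}(\S)$; what remains is to rule out isomorphism.

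For the strictness, I would argue by contradiction, using the specific way $\E'$ is enlarged in Algorithm~\ref{alg:learn}: the new context-orbit added to $\E$ is $\Orbit(e[a(u_1,\ldots,u_{i-1},x,u_i,\ldots,u_{k-1})])$ for witnesses $s_1,s_2,u_1,\ldots,u_{k-1}\in\S$, $e\in\E$, $a\in A$ with $\row_{(\S,\E,\T)}(s_1)=\row_{(\S,\E,\T)}(s_2)$ but $\T(e[a(u_1,\ldots,u_{i-1},s_1,u_i,\ldots,u_{k-1})]) \neq \T(e[a(u_1,\ldots,u_{i-1},s_2,u_i,\ldots,u_{k-1})])$ — i.e., the consistency-violation witnesses. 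Write $e' = e[a(u_1,\ldots,u_{i-1},x,u_i,\ldots,u_{k-1})] \in \E'$. Then, because $\E'$ contains $e'$, we have $e' \in \row_{(\S,\E',\T')}(a(u_1,\ldots,u_{i-1},s_1,u_i,\ldots,u_{k-1}))$ iff $\T'(e'[a(u_1,\ldots,u_{i-1},s_1,u_i,\ldots,u_{k-1})]) = 1$, and similarly for $s_2$; by the witness condition these two differ, so
\[
	\row_{(\S,\E',\T')}(a(u_1,\ldots,u_{i-1},s_1,u_i,\ldots,u_{k-1})) \neq \row_{(\S,\E',\T')}(a(u_1,\ldots,u_{i-1},s_2,u_i,\ldots,u_{k-1})).
\]
Note this requires $\S \cup \Next(\S)$ to be closed under the relevant insertion so that these elements are legitimate row-labels — which holds because $\S$ is subtree-closed and $u_1,\ldots,u_{k-1},s_1,s_2 \in \S$, so these trees lie in $\S \cup \Next(\S)$; if they are not in $\S$ proper one must be slightly careful and phrase the conclusion in terms of $\row$ on $\S\cup\Next(\S)$, but since the lemma is stated about $\row(\S)$ I would note that the same $s_1,s_2$ already witness what we need at the level of these composite trees, and the standard fix is that the algorithm first closes the table; for the purposes of this lemma I would assume (as the excerpt's Algorithm does) that after Lemma~\ref{lem:addS}-style additions $\S$ remains subtree-closed, so these composites are in $\S\cup\Next(\S)$.

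Now suppose, for contradiction, that $\row_{(\S,\E',\T')}(\S)$ and $\row_{(\S,\E,\T)}(\S)$ were isomorphic. Composing such an isomorphism with the canonical surjection $\row_{(\S,\E,\T)}(\S) \twoheadleftarrow \S$ and with the restriction surjection above, and using that on an orbit-finite nominal set any equivariant surjection onto an isomorphic copy must be injective (this is where I would invoke the finiteness machinery — concretely, if $\varphi: \row_{(\S,\E',\T')}(\S) \to \row_{(\S,\E,\T)}(\S)$ is the restriction map and also $\row_{(\S,\E',\T')}(\S)\cong\row_{(\S,\E,\T)}(\S)$, then $\varphi$ cannot be both surjective and non-injective, else iterating would contradict Lemma~\ref{lem:finiteness-f}), we conclude that the restriction map $\varphi$ is a bijection. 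But a bijective $\varphi$ means: for all $s_1, s_2 \in \S$, $\row_{(\S,\E,\T)}(s_1) = \row_{(\S,\E,\T)}(s_2)$ implies $\row_{(\S,\E',\T')}(s_1) = \row_{(\S,\E',\T')}(s_2)$. Applying this to the witnesses, together with the congruence property of $\row$ (adding the same context around $s_1$ and $s_2$) — more precisely, since $\varphi$ is a bijection the two rows of the composite trees $a(u_1,\ldots,x,\ldots,u_{k-1})[s_j]$ would also have to coincide — contradicts the displayed inequality above. Hence $\row_{(\S,\E',\T')}(\S)$ and $\row_{(\S,\E,\T)}(\S)$ are not isomorphic, and combined with $\preceq$ we get $\prec$.

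**Main obstacle.** The delicate point is not the combinatorics but correctly threading the contradiction: from "$\varphi$ is a bijection on the restricted row-sets over $\S$" one has to extract a contradiction at the level of the \emph{composite} trees $a(u_1,\ldots,u_{i-1},s_j,u_i,\ldots,u_{k-1})$, which are in $\Next(\S)$ rather than $\S$ in general. Bridging this requires either (i) observing that closedness of the table lets one replace each composite by an $\S$-representative with the same $\E$-row, then comparing $\E'$-rows of those representatives, or (ii) restating the lemma for $\row$ on $\S \cup \Next(\S)$. I expect the cleanest write-up uses the consistency-witness structure directly: $s_1,s_2$ have equal $\E$-rows but the new column $e'$ was designed precisely so that $s_1,s_2$ get \emph{different} $\E'$-rows (since $e' \in \E'$ and $\T'(e'[s_1]) = \T(e[a(\ldots,s_1,\ldots)]) \neq \T(e[a(\ldots,s_2,\ldots)]) = \T'(e'[s_2])$), so $\row_{(\S,\E,\T)}(s_1)=\row_{(\S,\E,\T)}(s_2)$ but $\row_{(\S,\E',\T')}(s_1)\neq\row_{(\S,\E',\T')}(s_2)$ immediately — no need to pass through $\Next(\S)$ at all — which shows the restriction map $\varphi$ is non-injective, hence by Lemma~\ref{lem:finiteness-f} cannot be an isomorphism, giving $\prec$.
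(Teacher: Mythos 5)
Your easy direction is fine and coincides with the paper's: the restriction map $h:\row'(\S)\to\row(\S)$, $\row'(s)\mapsto\row(s)$ (where $\row'$ denotes $\row_{(\S,\E',\T')}$), is well-defined, equivariant and surjective, and the consistency-violation witness gives $s_1,s_2\in\S$ with $\row(s_1)=\row(s_2)$ but $\row'(s_1)\neq\row'(s_2)$ --- the paper's proof starts from exactly such a pair (your long detour through the composite trees in $\Next(\S)$ is unnecessary, as you yourself note at the end). The genuine gap is in the non-isomorphism step. From the witness pair you only learn that the particular map $h$ is not injective, i.e.\ that $h$ itself is not an isomorphism; but $\prec$ requires that \emph{no} equivariant bijection between $\row(\S)$ and $\row'(\S)$ exists. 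You bridge this by asserting that an equivariant surjection onto an isomorphic orbit-finite copy must be injective, ``else iterating would contradict Lemma~\ref{lem:finiteness-f}''. That justification does not work: $\prec$ is defined as $\preceq$ together with non-isomorphism, and under your contradiction hypothesis every set you could place in a chain ($\row(\S)$, $\row'(\S)$, and any iterated images or quotients, which are again surjective images isomorphic to the originals) is isomorphic to the others, so no strictly increasing chain is ever produced and Lemma~\ref{lem:finiteness-f} gives nothing.

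The fact you are implicitly using is a Hopfian-type property of orbit-finite nominal sets (an equivariant surjection between isomorphic orbit-finite nominal sets is injective). It is true under the least-support assumption, but it needs a proof, and supplying it in the relevant situation is precisely the hard part of the paper's argument: the paper reduces to the orbits of the witnesses, writes $\row(\Orbit(s_1))=\CSec$ and $\row'(\Orbit(s_1))=\DTec$ via Proposition~\ref{prop:CSec}, represents both the restriction surjection $h'$ and a hypothetical isomorphism $f$ through Proposition~\ref{prop:func}, and by a finite-group computation (using the proof of Lemma~\ref{lem:isomorphism} to get $\rho|_D S=T\rho|_D$, hence $|S|=|T|$ and $\sigma|_C T=S\sigma|_C$) derives $\row'(s_1)=\row'(s_2)$, contradicting the choice of witnesses. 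Without this support-representation/group-order argument, or an equivalent orbit-by-orbit measure argument (least-support size, then order of the local symmetry), your proposal is missing the central step of the lemma.
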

A proof of this lemma is given in the Appendix.

\paragraph*{Termination and minimality}
Let $U$ be an unknown recognizable nominal tree language and
$\A_U = (Q_U,F_U,\delta_U)$ be the syntactic tree automaton constructed from $U$.
$\A_U$ is the minimum $G$-DBNTA recognizing $U$ in the sence of Lemma~\ref{lem:image}.
Let $(\S_0,\E_0,\T_0),(\S_1,\E_1,\T_1),(\S_2,\E_2,\T_2),\ldots$ be observation tables
constructed by Algorithm \ref{alg:learn} where
$(\S_i,\E_i,\T_i)$ extends to $(\S_{i+1},\E_{i+1},\T_{i+1})$ for $i \geq 0$.
Note that $\A_U$ is consistent with every $\T_i$ for $i \geq 0$.
By Lemmas \ref{lem:addS} and \ref{lem:addE},
$\row(\S_0) \prec \row(\S_1) \prec \row(\S_2) \prec \cdots$.
By Theorem \ref{thm:larger}, $\row(\S_i) \preceq Q_U$ for $i \geq 0$.
By Lemma \ref{lem:finiteness-f}, there is a non-negative integer $n$ such that
$\row(\S_0) \prec \row(\S_1) \prec \cdots \prec \row(\S_n) = Q_U$.
Thus, Algorithm \ref{alg:learn} terminates in finite steps.
By Lemma \ref{lem:smallestness}, 
$\A(\S_i,\E_i,\T_i) \sqsubseteq \A_U$ holds for every $i \geq 0$
such that $(\S_i,\E_i,\T_i)$ is closed and consistent,
and hence, Algorithm \ref{alg:learn} 
outputs the minimum $G$-DBNTA recognizing $U$ when it terminates.

\paragraph*{Running time analysis}
When Algorithm~\ref{alg:learn} extends an observation table $(\S,\E,\T)$,
the number of orbits of $\row(\S)$ increases or some orbits of $\row(\S)$ extend.
Extending an orbit $\CSec$ of $\row(\S)$ to $\DTec$ implies $|C| \leq |D|$.
If $|C| = |D|$, then $T \leq uSu^{-1}$ must hold for some injection $u:D \to C$.
By the standard theorem of finite groups,
$|uSu^{-1}| (=|S|)$ can be divided by $|T|$.
Therefore, we have the following theorem:
\begin{theorem}
Let $U$ be an unknown recognizable nominal tree language,
$Q = {\lec C_1,S_1 \rec} \cup \cdots \cup {\lec C_n,S_n \rec}$
be the set of states of the minimum $G$-DBNTA recognizing $U$,
$n$ be the number of orbits of $Q$ and
$m = \max\{|C_1|,\ldots,|C_n|\}$
be the largest cardinality of least supports of orbits of $Q$.
Let $p_1,\ldots,p_k$ be prime numbers and $j_1,\ldots,j_k$ be positive integers such that
$m! = p_1^{j_1} \cdot p_2^{j_2} \cdot \cdots \cdot p_k^{j_k}$.
Observation tables are extended at most $O(nm(j_1 + \cdots + j_k))$ times.
\end{theorem}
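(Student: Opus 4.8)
The plan is to bound the number of times the observation table is extended by tracking how $\row(\S)$ grows in the chain $\row(\S_0)\prec\row(\S_1)\prec\cdots$ established in the termination argument. We already know this chain is finite (by Lemma~\ref{lem:finiteness-f}) and ends at $Q_U$, which has $n$ orbits of the form $\lec C_i,S_i\rec$. Each extension step either (i) adds at least one new orbit to $\row(\S)$, or (ii) keeps the number of orbits fixed but strictly enlarges at least one orbit $\lec C,S\rec$ into some $\lec D,T\rec$ with $\lec C,S\rec\prec\lec D,T\rec$. I would first argue that type (i) steps occur at most $n$ times in total, since $\row(\S_i)\preceq Q_U$ means the number of orbits of $\row(\S_i)$ never exceeds $n$. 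The bulk of the work is then to bound, for a single orbit, how many type (ii) enlargements it can undergo before it stabilizes as one of the $\lec C_i,S_i\rec$.

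For the per-orbit bound, I would invoke the structural facts already recorded in the "Running time analysis" paragraph and in the proof of Lemma~\ref{lem:finiteness}: if $\lec C,S\rec\prec\lec D,T\rec$ then, writing things with an injection $u:D\to C$ satisfying $uS\subseteq Tu$, either $|C|<|D|$ (the support strictly grows) or $|C|=|D|$ and $S$ is isomorphic to a proper subgroup of $T$, whence $|S|$ properly divides $|T|$ and in particular $|S|<|T|$. The support cardinality is bounded above by $m=\max_i|C_i|$, since the final value is one of the $\lec C_i,S_i\rec$ and support size is monotone along $\prec$; so the "support grows" event happens at most $m$ times per orbit. For the "group grows" events at a fixed support size, the relevant chain is $S\lneq S'\lneq\cdots\lneq T$ inside a symmetric group $\Sym(C)$ with $|C|\le m$, so each step multiplies the group order by an integer $\ge 2$ that divides $m!$; the length of such a chain of divisors of $m!$ is at most the number of prime factors of $m!$ counted with multiplicity, i.e. $j_1+\cdots+j_k$. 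Combining: each orbit undergoes at most $m$ support-enlargements, and between consecutive support sizes at most $j_1+\cdots+j_k$ group-enlargements, giving $O(m(j_1+\cdots+j_k))$ type-(ii) steps per orbit, hence $O(nm(j_1+\cdots+j_k))$ in total; adding the at most $n$ type-(i) steps is absorbed into this bound.

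The main obstacle I anticipate is making precise the claim that a single orbit of $\row(\S)$ "evolves" coherently through the sequence of tables — i.e. that one can legitimately speak of *an* orbit being enlarged rather than orbits merging or splitting in complicated ways as $\E$ and $\S$ grow. Concretely, when $\E$ is extended (Lemma~\ref{lem:addE}) the map $\row_{(\S,\E,\T)}\to\row_{(\S,\E',\T')}$ refines the equivalence on $\S$, and when $\S$ is extended (Lemma~\ref{lem:addS}) new orbits appear; in both cases one has a canonical equivariant surjection from the new $\row(\S)$ onto the old one (this is exactly what $\prec$ gives us, via Proposition~\ref{prop:func}), and I would use these surjections to set up a forest structure on the orbits across all tables, so that each final orbit $\lec C_i,S_i\rec$ has a well-defined ancestry, and the support-size / group-order potential is monotone along each branch. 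Once this bookkeeping is in place, the counting above applies branch by branch. I would also need the elementary group-theory fact that a chain of proper subgroup inclusions $S_0\lneq S_1\lneq\cdots\lneq S_r$ forces $\prod_{i}[S_i:S_{i-1}]=|S_r|/|S_0|$ with each index $\ge2$, so $r\le\log_2|S_r|\le\log_2 m!$, and more sharply $r\le j_1+\cdots+j_k$ since each index contributes at least one prime factor of $m!$ — this is the $\Omega$-function (number of prime factors with multiplicity) bound and is the only nontrivial external ingredient.
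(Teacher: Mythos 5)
Your proposal follows essentially the same route as the paper: the paper's own justification is exactly the counting in its ``Running time analysis'' paragraph (each extension either increases the number of orbits of $\row(\S)$, bounded by $n$ via Theorem~\ref{thm:larger}, or strictly extends some orbit, where the support size can change at most $m$ times and the local symmetry group at most $j_1+\cdots+j_k$ times by a divisibility argument), and the extra bookkeeping you propose (the ancestry/forest structure on orbits across tables, and the observation that the surjections furnished by $\prec$ make the support-size/group-order potential monotone along branches) is a legitimate elaboration of details the paper leaves implicit. One point should be corrected, though: you have the group-theoretic comparison backwards. Along the chain $\row(\S_0)\prec\row(\S_1)\prec\cdots$, an old orbit $\CSec$ is the \emph{image} of a new orbit $\DTec$ under an equivariant surjection, so Proposition~\ref{prop:func} (applied with domain $\DTec$ and codomain $\CSec$) yields an injection $u:C\to D$ with $uT\subseteq Su$; hence $|C|\le|D|$, and when $|C|=|D|$ it is $T$ that is conjugate to a proper subgroup of $S$, so $|T|$ properly divides $|S|$ --- the local symmetry \emph{shrinks} as the table is refined, exactly as in the paper's statement $T\le uSu^{-1}$ and as in the proof of Lemma~\ref{lem:finiteness} (there $\DTec\prec\CSec$ gives $|S|<|T|$; relabelling to your orientation reverses the inequality). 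Your claim that ``$S$ is isomorphic to a proper subgroup of $T$'' and that the injection $u:D\to C$ with $uS\subseteq Tu$ witnesses $\CSec\prec\DTec$ is therefore inverted, and as written it is also internally inconsistent with your assertion $|C|<|D|$. Fortunately the final bound is insensitive to the direction: a strictly \emph{descending} chain of subgroups whose orders all divide $|C|!\mid m!$ likewise has length at most $\Omega(m!)=j_1+\cdots+j_k$, so after fixing the orientation your count $O(nm(j_1+\cdots+j_k))$ goes through as in the paper.
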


\section{Example}\label{sec:example}
	Let $(\N,\Sym(\N))$ be the equality symmetry and $A = \N$ be an alphabet.
Note that $(\N,\Sym(\N))$ admits least support and $A$ is a single orbit nominal set.
Let $U = \Orbit(1) \cup \Orbit(1(1)) \cup \Orbit(1(1(1))) \subseteq \Tree_2(A)$.
We now show a part of a run of Algorithm~\ref{alg:learn} for $U$.

First, the elements of the initial observation table $(\S_0,\E_0,\T_0)$ 
shown in Table~\ref{tab:first} are
$\S_0 = \Orbit(1) \, (=\N)$,\;
$\E_0 = \{x\}$,\;
$\Next(\S_0) = \Orbit(1(1)) \cup \Orbit(2(1)) \cup \Orbit(1(1,1)) \cup \Orbit(1(2,1))
		\cup \Orbit(1(1,2)) \cup \Orbit(2(1,1))$,
	$\T_0(a) = \T_0(a(a)) = 1$ and
	$\T_0(a(b)) = \T_0(a(a,a)) = \T_0(a(b,a)) = \T_0(a(a,b)) =
		\T_0(a(b,b)) = 0$
for all $a,b \in A$ such that $a \neq b$.
This observation table $(\S_0,\E_0,\T_0)$ is consistent but not closed because 
there is no $s \in \S_0$ such that $\row(s) = \row(2(1))$.
Thus, Algorithm~\ref{alg:learn} adds $\Orbit(2(1))$ to $\S_0$ and
extends $\T_0$ using membership queries.
We have the observation table $(\S_1,\E_1,\T_1)$ shown in Table~\ref{tab:second} where
\begin{gather*}
	{\S_1 = \Orbit(1) \cup \Orbit(2(1))},\quad{\E_1 = \{x\}}, \\
	\Next(\S_1) = \{a(t) \notin \S_1 \mid a \in A,t \in \S_1\} 
		\cup \{a(t_1,t_2) \notin \S_1 \mid a \in A,t_1,t_2 \in \S_1\}.
\end{gather*}

$(\S_1,\E_1,\T_1)$ is closed and consistent, and 
Algorithm~\ref{alg:learn} asks an equivalence query with $G$-DBNTA $\A(\S_1,\E_1,\T_1)$.
$\A(\S_1,\E_1,\T_1)$ does not recognize $U$ because $1(1(1)) \notin L(\A(\S_1,\E_1,\T_1)$.
Thus, Algorithm~\ref{alg:learn} adds $\Orbit(1(1(1)))$ to $\S_1$ 
if $1(1(1))$ is returned as a counterexample and extends $\T_1$
using membership queries.
We have the observation table $(\S_2,\E_2,\T_2)$ shown in Table~\ref{tab:third}.
$(\S_2,\E_2,\T_2)$ is closed but not consistent because
despite $\row(1) = \row(1(1(1)))$, $\row(1(1)) \neq \row(1(1(1(1))))$.
Thus, Algorithm~\ref{alg:learn} adds $\Orbit(1(x))$ to $\E_2$ and extends $\T_2$
using membership queries,
resulting in the observation table $(\S_3,\E_3,\T_3)$ shown in Table~\ref{tab:fourth}.
Continuing these extensions, Algorithm~\ref{alg:learn} finally obtains an observation table 
$(\S_n,\E_n,\T_n)$ such that $\A(\S_n, \E_n,\T_n)$ recognizes $U$.

\begin{table}
	\begin{minipage}[t]{2.8cm}
	\centering
	\caption{
	}
	\label{tab:first}
	$
	\begin{array}{c|c}
		&x \\
		\hline
		a & 1 \\
		\hline
		a(a) & 1 \\
		a(b) & 0 \\
		a(a,a) & 0 \\
		a(a,b) & 0 \\
		a(b,a) & 0 \\
		a(b,b) & 0 
	\end{array}
	$
	\end{minipage}
	\begin{minipage}[t]{2.8cm}
	\centering
	\caption{
	}
	\label{tab:second}
	$
	\begin{array}{c|c}
		&x \\
		\hline
		a & 1 \\
		a(b) & 0 \\
		\hline
		a(a) & 1 \\
		a(a,a) & 0 \\
		a(a,b) & 0 \\
		a(b,a) & 0 \\
		a(b,b) & 0 \\
		\mathit{others} & 0
	\end{array}
	$
	\end{minipage}
	\begin{minipage}[t]{2.8cm}
	\centering
	\caption{
	}
	\label{tab:third}
	$
	\begin{array}{c|c}
		&x \\
		\hline
		a & 1 \\
		a(b) & 0 \\
		a(a(a)) & 1 \\
		\hline
		a(a) & 1 \\
		a(a,a) & 0 \\
		a(a,b) & 0 \\
		a(b,a) & 0 \\
		a(b,b) & 0 \\
		\mathit{others} & 0
	\end{array}
	$
	\end{minipage}
	\begin{minipage}[t]{2.8cm}
	\centering
	\caption{
	}
	\label{tab:fourth}
	$
	\begin{array}{c|cc}
		&x & c(x)\\
		\hline
		a & 1 & {\begin{array}{ll}
					1 & (a = c) \\ 
					0 & (a \neq c)
				\end{array}}\\
		a(b) & 0 & 0 \\
		a(a(a)) & 1 & 0\\
		\hline
		a(a) & 1 & {\begin{array}{ll}
					1 & (a = c) \\ 
					0 & (a \neq c)
				\end{array}}\\
		a(a,a) & 0 & 0\\
		a(a,b) & 0 & 0\\
		a(b,a) & 0 & 0\\
		a(b,b) & 0 & 0\\
		\mathit{others} & 0 & 0
	\end{array}
	$
	\end{minipage}
	\\
	for all $a,b,c \in A$ satisfying $a \neq b$.
\end{table}

\section{Conclusion}
	In this paper, we defined deterministic bottom-up nominal tree automata (DBNTA), 
which operate on trees whose nodes are labelled with elements of an orbit finite nominal set. 
We then proved a Myhill-Nerode theorem for the class of languages recognized by DBNTA 
and proposed an active learning algorithm for DBNTA based on the theorem. 
The algorithm can deal with any data symmetry that admits least support, 
not restricted to the equality symmetry and/or the total order symmetry. 

Implementation and possible applications of the proposed learning algorithm are left as future work.
For implementation, a concrete data structure for support representations of orbit finite sets in
an observation table should be determined.
Moreover, we are considering an application of the proposed algorithm to a compositional verification of a program
that manipulates XML documents.

\appendix
\newpage
\section*{Appendix}
\smallskip\par\noindent
{\bf Proposition \ref{prop:func}.}
{\it
	Let $X = \CSec$ and $Y = \DTec$ be single orbit nominal sets.
	For every equivariant function $f:X \to Y$,
	there is an injection $u$ from $D$ to $C$ satisfying $uS \subseteq Tu$ and
	$f([\pi|_C]_S) = [u]_T \cdot \pi$, for all $\pi \in G$,
	where $uS = \{ us \mid s \in S \}$ and $Tu = \{ tu \mid t \in T \}$.
	Conversely, for every injection $u:D \to C$ satisfying $uS \subseteq Tu$,
	$f([\pi|_C]_S) = [u]_T \cdot \pi$ is an equivariant function from $X$ to $Y$.
}

\begin{proof}
	Let $f:X \to Y$ be an equivariant function.
	Let $u$ be an arbitrary element of the equivalent class $f([e|_C]_S)$; i.e.,
	$f([e|_C]_S) = [u]_T$.
	By \cite[Proposition 9.16]{BKL14}, 
	we can assume the type of $u$ is $u:D \to C$ without loss of generality.
	Then, $f([\pi|_C]_S) = f([e|_C]_S \cdot \pi) = f([e|_C]_S) \cdot \pi = [u]_T \cdot \pi$
	for all $\pi \in G$
	because $f$ is equivariant.
	For any $x \in X$ and $\pi \in G$,
	if $x \cdot \pi = x$, then $f(x) \cdot \pi = f(x \cdot \pi) = f(x)$
	because $f$ is equivariant.
	Therefore, 
	\begin{gather*}
		\forall \pi \in G.\ [e|_C]_S \cdot \pi = [e|_C]_S
			\Rightarrow [u]_T \cdot \pi = [u]_T \\
		\forall \pi \in G.\ [\pi|_C]_S = [e|_C]_S
			\Rightarrow [u\pi]_T = [u]_T \\
		\forall \pi \in G.\ \pi|_C \equiv_S e|_C
			\Rightarrow u\pi \equiv_T u \\
		\forall \pi \in G.\ \pi|_C (e|_C)^{-1} \in S
			\Rightarrow u\pi u^{-1} \in T \\
		\forall \pi \in G.\ \pi|_C \in S \Rightarrow u\pi u^{-1} \in T \\
		\forall \tau \in S.\ u\tau u^{-1} \in T \\
		\forall \tau \in S.\ \tau \in u^{-1} T u \\
		S \leq u^{-1} T u \\
		uS \subseteq Tu.
	\end{gather*}

	Let $u:D \to C$ be an injective function satisfying $uS \subseteq Tu$.
	We prove that $f([\pi|_C]_S) = [u]_T \cdot \pi$ is an equivariant function
	from $X$ to $Y$.
	We first show that $f$ is well-defined.
	When $[\pi|_C]_S = [\sigma|_C]_S$,
	\begin{gather*}
		\pi|_C \equiv_S \sigma|_C \\
		(\pi|_C)(\sigma|_C)^{-1} \in S \\
		u(\pi|_C)(\sigma|_C)^{-1} \in uS \subseteq Tu \\
		u(\pi|_C)(\sigma|_C)^{-1}u^{-1} \in T \\
		(u\pi)|_D (u\sigma)|_D^{-1} \in T \\
		(u\pi)|_D \equiv_T (u\sigma)|_D \\
		[(u\pi)|_D]_T = [(u\sigma)|_D]_T \\
		[u]_T \cdot \pi = [u]_T \cdot \sigma
	\end{gather*}
	Moreover, $f$ is equivariant because
	$f([\pi|_C]_S \cdot \rho) = f([(\pi\rho)|_C]_S) = [u]_T \cdot (\pi\rho) 
	= ([u]_T \cdot \pi) \cdot \rho = f([\pi|_C]_S) \cdot \rho$.
	\qed
\end{proof}

\smallskip\par\noindent
{\bf Lemma \ref{lem:addE}.}
{\it
	If $(\S,\E,\T)$ and $(\S,\E',\T')$ are observation tables such that
	$\E \subsetneq \E'$ and 
	$\T'(e[s]) = \T(e[s])$ for all $e \in \E$ and $s \in \S$, then
	$\row_{(\S,\E,\T)}(\S) \prec \row_{(\S,\E',\T')}(\S)$.
}

\begin{proof}
	For readability, let $\row$ denote $\row_{(\S,\E,\T)}$ and 
	$\row'$ denote $\row_{(\S,\E',\T')}$.
	By the definition of $\prec$, it suffices to show that
	there is an equivariant surjection from a subset of $\rSETd(\S)$ to 
	$\rSET(\S)$
	and they are not isomorphic.
	We first show that
	the function $h:\rSETd(\S) \to \rSET(\S)$ 
	defined by $\rSETd(s) \mapsto \rSET(s)$
	is surjective and equivariant.
	By the following, $h$ is well-defined: 
	\begin{align*}
		\rSETd(s_1) = \rSETd(s_2)
		& \Leftrightarrow \forall e \in \E'.\T'(e[s_1]) = \T'(e[s_2]) \\
		& \Rightarrow \forall e \in \E.\T(e[s_1]) = \T(e[s_2]) \\
		& \Leftrightarrow \rSET(s_1) = \rSET(s_2).
	\end{align*}
	Moreover, $h$ is equivariant by 
	$\rSETd(s) \cdot \pi = \rSETd(s \cdot \pi) \mapsto \rSET(s \cdot \pi) = \rSET(s) \cdot \pi$.
	It is easy to see that $h$ is surjective.

	Next, we show that $\rSET(\S)$ and $\rSETd(\S)$ are not isomorphic.
	To show this, we only have to show that 
	\begin{equation}
		\rSET(\Orbit(s_1))(=\rSET(\Orbit(s_2)) \prec \rSETd(\Orbit(s_1)) \cup \rSETd(\Orbit(s_2))
	\end{equation}
	for $s_1,s_2 \in \S$ such that $\rSET(s_1) = \rSET(s_2)$ and $\rSETd(s_1) \neq \rSETd(s_2)$
	because $\rSET(\S)$ and $\rSETd(\S)$ are orbit finite sets.
	Let $h'$ be $h$ where the domain is restricted to
	$\rSETd(\Orbit(s_1)) \cup \rSETd(\Orbit(s_2))$.
	It is easy to see that $h'$ is a surjective and equivariant function from
	$\rSETd(\Orbit(s_1)) \cup \rSETd(\Orbit(s_2))$ to $\rSET(\Orbit(s_1))$.
	We show that $\rSETd(\Orbit(s_1)) \cup \rSETd(\Orbit(s_2))$ is not isomorphic to
	$\rSET(\Orbit(s_1))$.
	If $\rSETd(\Orbit(s_1)) \neq \rSETd(\Orbit(s_2))$ then this holds trivially.
	Assume that $\rSETd(\Orbit(s_1)) = \rSETd(\Orbit(s_2))$ and
	let 
	\begin{align*}
		\CSec &= \rSET(\Orbit(s_1)), \text{and} \\
		\DTec &= \rSETd(\Orbit(s_1))
	\end{align*}
	by Proposition \ref{prop:CSec}.
	By the definition of $\CSec$ and $\DTec$,
	$\rSETd(s_1) = [\pi_1|_D]_T$ and $\rSETd(s_2) = [\pi_2|_D]_T$ 
	for some $\pi_1,\pi_2 \in G$.
	By Proposition \ref{prop:func}, we can write $h'$ as 
	$h'([\pi|_D]_T) = [\sigma|_C]_S \cdot \pi$ for some $\sigma \in G$.
	Thus,
	\begin{align*}
		h'(\row'(s_1)) = h([\pi_1|_D]_T) = [\sigma|_C]_S \cdot \pi_1 
			= [(\sigma\pi_1)|_C]_S = \rSET(s_1), \\
		h'(\row'(s_2)) = h([\pi_2|_D]_T) = [\sigma|_C]_S \cdot \pi_2 
			= [(\sigma\pi_2)|_C]_S = \rSET(s_2).
	\end{align*}
	Assume that $\CSec$ and $\DTec$ are isomorphic, i.e.,
	there exists an equivariant bijection $f:\CSec \to \DTec$.
	By Proposition \ref{prop:func},
	$f$ can be written as 
	$f([\pi|_C]_S) = [\rho|_D]_T \cdot \pi$ for some $\rho \in G$.
	We have
	\begin{gather*}
		\rSET(s_1) = \rSET(s_2) \\
		f(\rSET(s_1)) = f(\rSET(s_2)) \\
		f([(\sigma \pi_1)|_C]_S) = f([(\sigma \pi_2)|_C]_S) \\
		[\rho|_D]_T \cdot (\sigma \pi_1) = [\rho|_D]_T \cdot (\sigma \pi_2) \\
		[(\rho \sigma \pi_1)|_D]_T = [(\rho \sigma \pi_2)|_D]_T \\
		(\rho \sigma \pi_1)|_D \equiv_T (\rho \sigma \pi_2)|_D \\
		(\rho \sigma \pi_1)|_D (\rho \sigma \pi_2)|_D^{-1} \in T \\
		(\rho \sigma \pi_1 \pi_2^{-1} \sigma^{-1} \rho^{-1})|_D \in T \\
		(\rho \sigma \pi_1 \pi_2^{-1})|_D \in T\rho \sigma.
	\end{gather*}
	Because $f$ is an equivariant and bijective function,
	$\rho|_D S = T \rho|_D$ by the proof of Lemma \ref{lem:isomorphism}.
	Thus, we have $|S| = |T|$.
	By Proposition \ref{prop:func} and $h'$, $\sigma|_C T \subseteq S \sigma|_C$.
	By $\sigma|_C T \subseteq S \sigma|_C$ and $|S| = |T|$, $\sigma|_C T = S \sigma|_C$.
	By $\rho|_D S = T \rho|_D$ and $\sigma|_C T = S \sigma|_C$, 
	$\rho|_D \sigma|_C T = T \rho|_D \sigma|_C$.
	Thus,
	\begin{gather*}
		(\rho \sigma \pi_1 \pi_2^{-1})|_D \in T \rho \sigma = \rho \sigma T \\
		(\pi_1 \pi_2^{-1})|_D \in T \\
		\pi_1|_D \equiv_T \pi_2|_D \\
		[\pi_1|_D]_T = [\pi_2|_D]_T \\
		\rSETd(s_1) = \rSETd(s_2).
	\end{gather*}
	This is a contradiction, and hence $\CSec$ and $\DTec$ are not isomorphic.
	\qed
\end{proof}

\end{document}